\tikzstyle{vertex}=[circle,inner sep=2,minimum size =2mm,semithick,fill=white!80!blue, draw=black]
\tikzstyle{brace}=[decoration=brace,decorate]
\renewcommand{\epsilon}{\varepsilon}
\newcommand{\Oof}{\mathcal{O}}
\newcommand{\CCC}{\mathcal{C}}
\newcommand{\FFF}{\mathcal{F}}
\newcommand{\PPP}{\mathcal{P}}
\newcommand{\ie}{i.e.\@ }
\newcommand{\sth}{\mathop{:}}
\newcommand{\N}{\mathbb{N}}
\renewcommand{\phi}{\varphi}
\newcommand{\minor}{\preccurlyeq}
\newcommand{\LA}{L\hspace{0.3mm}(\hspace{-0.2mm}A\hspace{0.2mm})}
\newcommand{\abs}[1]{\ensuremath{\left\lvert#1\right\rvert}}
\renewcommand{\mid}{:}
\definecolor{blue}{rgb}{0.1,0.2,0.5}
\definecolor{brown}{rgb}{0.6,0.6,0.2}
\newenvironment{claimproof}[1]{\par\noindent\textit{Proof.}\space#1}{\hfill $\dashv$}
\theoremstyle{plain}
\newtheorem{theorem}{Theorem}
\newcommand{\newtheoremwithcrefformat}[2]{%
  \newtheorem{#1}[theorem]{#2}%
  \crefformat{#1}{##2\MakeUppercase#1~##1##3}%
  \Crefformat{#1}{##2\MakeUppercase#1~##1##3}%
}
\newcommand{\newseptheoremwithcrefformat}[2]{%
  \newtheorem{#1}{#2}%
  \crefformat{#1}{##2\MakeUppercase#1~##1##3}%
  \Crefformat{#1}{##2\MakeUppercase#1~##1##3}%
}
\theoremstyle{nonumberplain}
\newtheorem{proof}{Proof}
\def\cqedsymbol{\ifmmode$\lrcorner$\else{\unskip\nobreak\hfil
\penalty50\hskip1em\null\nobreak\hfil$\lrcorner$
\parfillskip=0pt\finalhyphendemerits=0\endgraf}\fi}
\begin{document}
\title{Polynomial Kernels and Wideness Properties of\\
Nowhere Dense Graph Classes\thanks{This work has been supported by the
European Research Council (ERC) under the European Union's Horizon
2020 research and innovation programme (ERC Consolidator Grant
DISTRUCT, grant agreement No 648527).}}  
\author{Stephan Kreutzer\\Technische Universit\"at Berlin\\\texttt{stephan.kreutzer@tu-berlin.de} \and Roman Rabinovich\\Technische Universit\"at Berlin\\\texttt{roman.rabinovich@tu-berlin.de} \and Sebastian Siebertz\\Technische Universit\"at Berlin\\\texttt{sebastian.siebertz@tu-berlin.de}}

\date{}
\maketitle

\begin{abstract}
Nowhere dense classes of
  graphs~\cite{nevsetvril2010first,nevsetvril2011nowhere} are very
  general classes of uniformly sparse graphs with several seemingly
  unrelated characterisations. From an algorithmic perspective, a
  characterisation of these classes in terms of \emph{uniform
    quasi-wideness}, a concept originating in finite model theory, has
  proved to be particularly useful. Uniform quasi-wideness is used
  in many fpt-algorithms on nowhere dense classes. However, the
  existing constructions showing the equivalence of nowhere denseness
  and uniform quasi-wideness imply a non-elementary blow up in the
  parameter dependence of the fpt-algorithms, making them infeasible
  in practice.
As a first main result of this paper, we use tools from logic, in
particular from a sub-field of model theory known as stability
theory, to establish polynomial bounds for the equivalence of
nowhere denseness and uniform quasi-wideness. 

A powerful method in parameterized complexity theory is to compute a
problem kernel in a pre-computation step, that is, to reduce the input
instance in polynomial time to a sub-instance of size bounded in the parameter only
(independently of the input graph size). Our new tools allow us to
obtain for every fixed radius $r\in \N$ 
a polynomial kernel for the \mbox{distance-$r$} dominating set problem
on nowhere dense classes of graphs. This result is particularly
interesting, as it implies that for every class $\CCC$ of graphs which
is closed under taking subgraphs, the
distance-$r$ dominating set problem admits a kernel on $\CCC$ for every value of~$r$ if, and
only if, it already admits a polynomial kernel for every value of $r$ (under the standard assumption that
$\text{FPT}\neq\text{W[2]}$).
\end{abstract}


\section{Introduction}

Given a graph $G$ and an integer $k$, the \textsc{Dominating Set}
problem is to determine the existence of a subset $D\subseteq V(G)$ of
size at most $k$ such that every vertex $u\in V(G)$ is
\emph{dominated} by $D$, that is, if $u$ does not belong to $D$, then
it must have a neighbour in $D$. The \textsc{Dominating Set} problem,
parameterized by the size of the solution $k$, plays a central role in
parameterized complexity theory, it is arguably one of the most
important examples of a W[$2$]-complete problem, and hence considered
intractable from the point of view of parameterized complexity on general
graphs. A problem is fixed-parameter tractable on a class~$\CCC$ of
graphs parameterized by the solution size~$k$, if there is an
algorithm deciding whether a graph $G\in \CCC$ admits a solution of
size $k$ in time $f(k)\cdot \abs{V(G)}^c$, for some computable function
$f$ and constant $c$.

A particularly fruitful approach in parameterized complexity
theory and algorithmic graph structure theory is the study of hard
computational problems on restricted classes of inputs. This research
is based on the observation that many problems such as
\textsc{Dominating Set}, which are considered intractable in general,
can be solved efficiently on classes of graphs such as graphs of
bounded treewidth, planar graphs, or more generally, graph classes
excluding a fixed minor.

An important goal of this line of research is to identify the most
general classes of graphs on which a wide range of algorithmic
problems can be solved efficiently.  In this context, classes of
graphs excluding a fixed minor have been studied intensively. More
recently, even more general classes of graphs such as those excluding
a fixed topological minor have received increased attention.

A useful method in parameterized complexity is to compute a
problem kernel in a polynomial time pre-computation step, that is, to
reduce the input instance to a sub-instance of size bounded in the
parameter only (independently of the input graph size).  The first
important result of this type for the \textsc{Dominating Set}
problem by Alber et al.~\cite{alber2004polynomial} showed that there
exists a kernel of linear size for the problem on planar
graphs. Linear kernels were later found for bounded genus
graphs~\cite{bodlaender2009meta}, apex-minor-free
graphs~\cite{fomin2010bidimensionality}, $H$-minor-free
graphs~\cite{fomin2012linear}, and $H$-topological-minor-free
graphs~\cite{FominLST13}.  

The algorithmic results on these graph classes are in one way or
another based on topological arguments which can be derived from
structure theorems for the corresponding class. Most notable structure theorems in this
context are Robertson and Seymour's structure theorem for
$H$-minor-free graphs~\cite{robertson2003graph} or its extension to
$H$-topological-minor-free graphs by Grohe and
Marx~\cite{grohe2015structure}. A complete shift in
paradigm was initiated by Ne\v{s}et\v{r}il and Ossona de Mendez with
their ground-breaking work on bounded
expansion~\cite{nevsetvril2008grad,nevsetvril2008gradb} and nowhere
dense classes of
graphs~\cite{nevsetvril2010first,nevsetvril2011nowhere}.  On these
classes, which properly extend the aforementioned classes defined by
excluded (topological) minors, many topological
arguments are replaced by much more general density based arguments. 

\smallskip
Formally, a graph $H$ with $V(H)=\{v_1,\ldots, v_n\}$ is a
\emph{minor} of a graph $G$, written $H\minor G$, if there are
pairwise vertex disjoint connected subgraphs $H_1,\ldots, H_n$ of~$G$
such that whenever $\{v_i,v_j\}\in E(H)$, then there are $u_i\in V(H_i)$
and $u_j\in (H_j)$ with $\{u_i,u_j\}\in E(G)$. We call
$(H_1,\ldots, H_n)$ a {\em{minor model}} of~$H$ in~$G$. The graph $H$
is a {\em{depth-$r$ minor}} of $G$, denoted $H\minor_rG$, if there is
a minor model $(H_1,\ldots,H_n)$ of~$H$ in $G$ such that each $H_i$
has radius at most $r$. We denote the complete graph on~$t$ vertices
by $K_t$ and the complete bipartite graph with parts of size $s$ and $t$
by $K_{s,t}$. 

\begin{definition}\label{def:nwd}
  A class $\CCC$ of graphs is \emph{nowhere dense} if there is a
  function $f\colon\N\rightarrow \N$ such that $K_{f(r)}\not\minor_r G$ for
  all $r\in \N$ and all $G\in \CCC$.
\end{definition}

It turned out that nowhere dense classes have many equivalent and
seemingly unrelated characterisations making it an extremely robust
and natural concept~\cite{NesetrilOdM12}.

Algorithmically, a characterisation of nowhere dense classes in terms
of \emph{uniform quasi-wideness}, a concept emerging from finite model
theory~\cite{dawar2010homomorphism}, has proved to be extremely
useful. A set $B\subseteq V(G)$ is $r$-independent in a graph $G$ if 
any two distinct vertices of $B$ have distance greater than $r$ in $G$. 

\begin{definition}\label{def:uqw}
  A class $\CCC$ of graphs is \emph{uniformly quasi-wide} if there are
  functions $N\colon\N\times\N\rightarrow \N$ and $s\colon\N\rightarrow \N$ such
  that for all $r,m\in \N$ and all subsets $A\subseteq V(G)$ for
  $G\in \CCC$ of size $\abs{A}\geq N(r,m)$ there is a set
  $S\subseteq V(G)$ of size $\abs{S}\leq s(r)$ and a set $B\subseteq A$ of
  size $\abs{B}\geq m$ which is $r$-independent in $G-S$. 
  The functions $N$ and
  $s$ are called the \emph{margin} of the class~$\CCC$.
\end{definition}

\begin{theorem}[Ne\v{s}et\v{r}il and Ossona de Mendez \cite{nevsetvril2010first}]
  A class $\CCC$ of graphs is nowhere dense if, and only if, it is uniformly
  quasi-wide.
\end{theorem}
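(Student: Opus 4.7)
My plan is to prove both directions separately, with most of the work in the nowhere dense $\Rightarrow$ uniformly quasi-wide direction.

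For the easy direction, assume $\CCC$ is not nowhere dense and derive a contradiction with uniform quasi-wideness. Fix $r_0$ such that for every $t$ some $G\in\CCC$ has $K_t\minor_{r_0} G$, with minor model $(H_1,\dots,H_t)$, and let $v_i$ be a center of $H_i$, so that $\dist_G(v_i,v_j)\le 2r_0+1$. Take $A:=\{v_1,\ldots,v_t\}$ and set $r:=2r_0+1$. The key observation is that the branch sets $H_i$ are pairwise disjoint, so any $S\subseteq V(G)$ with $|S|\le s$ meets at most $s$ of them; hence at least $t-s$ of the $v_i$'s lie in untouched branch sets, and any two such $v_i,v_j$ remain at distance $\le 2r_0+1$ in $G-S$ via the path through $H_i$, the connecting edge, and $H_j$. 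Choosing $m:=s(r)+2$ and $t\ge N(r,m)$, uniform quasi-wideness would yield an $r$-independent $B\subseteq A$ in $G-S$ with $|B|\ge m$; but then $|B\setminus S|\ge 2$ forces two elements of $B$ to be centers of untouched branch sets, contradicting $r$-independence.

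For the hard direction I proceed by induction on $r$, showing that a nowhere dense class is uniformly quasi-wide with some margin $(N,s)$. For the base case $r=1$, nowhere denseness at depth $0$ gives a uniform bound on the edge density, so $G[A]$ has bounded average degree and thus contains an independent set of size $\Omega(|A|)$; we take $S=\emptyset$. For the inductive step, given a large $A\subseteq V(G)$, first apply the induction hypothesis to extract $S_0$ with $|S_0|\le s(r)$ and a large $r$-independent $B_0\subseteq A$ in $G-S_0$. The remaining task is to forbid pairs in $B_0$ at distance exactly $r+1$ in $G-S_0$, possibly after enlarging $S_0$ by finitely many vertices.

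To do this I introduce the auxiliary graph $H$ on $B_0$ in which $b\sim b'$ iff $\dist_{G-S_0}(b,b')=r+1$, witnessed by a path $P_{b,b'}$. If $H$ contained a biclique $K_{s,t}$ with $s$ and $t$ sufficiently large, one could pick, for each edge of this biclique, an interior vertex of $P_{b,b'}$, contract the two halves of $P_{b,b'}$ around it into branch sets of radius roughly $(r+1)/2$, and thereby exhibit $K_{s,t}$ as a depth-$O(r)$ minor of $G$. Since nowhere denseness rules this out for large enough parameters, the Kővári–Sós–Turán theorem forces $H$ to have few edges; removing a bounded number of high-degree vertices from $B_0$ and absorbing their shortest-path midpoints into an enlarged $S$ yields a large independent set in $H$, which is $(r+1)$-independent in $G$ minus the enlarged separator. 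Tracking parameters gives the margin for $r+1$.

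The main obstacle is the inductive step: one must make the lift from a biclique in $H$ to a shallow dense minor in $G$ truly depth-$O(r)$ (since different paths $P_{b,b'}$ can share vertices, a careful selection of midpoints and branch sets is needed), and one must ensure the accumulated separator depends only on $r$, not on $m$. Once this structural trade-off between density in $H$ and shallow minors in $G$ is made precise, the induction closes and the theorem follows.
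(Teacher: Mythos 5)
The paper does not reprove this theorem; it cites Ne\v set\v ril and Ossona de Mendez, and then proves a \emph{quantitative} strengthening (Theorem~\ref{thm:uqw}) via stability theory: extract a long $\Delta_k$-indiscernible sequence (Malliaris--Shelah), show that against a long enough indiscernible sequence every vertex is adjacent either to at most one term or to all of them (\Cref{lem:realisedtypes}), delete the few all-adjacent vertices, contract disjoint $i$-balls, and iterate. Your route is the classical shallow-minor one, which is genuinely different, so the comparison is really against the original NOdM argument, not the paper's.

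Your easy direction is correct: centres of a depth-$r_0$ model of $K_t$ stay pairwise within distance $2r_0+1$ in $G-S$ as long as their branch sets are untouched, and $|S|\le s$ can touch at most $s$ of them, so any $B$ that is $(2r_0+1)$-independent in $G-S$ has size at most $s+1$; taking $m=s(r)+2$ contradicts uniform quasi-wideness.

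The hard direction has concrete gaps, some of which you flag yourself. \emph{Base case:} nowhere denseness at depth~$0$ only bounds the clique number, not the edge density (e.g.\ $K_{n,n}$ is $K_3$-free with quadratic edge count), so the claim that $G[A]$ has bounded average degree is false; you should invoke Ramsey in a $K_{f(0)}$-free graph, which yields an independent set of size $|A|^{1/(f(0)-1)}$ rather than $\Omega(|A|)$ --- still sufficient, but the stated reason is wrong. \emph{Biclique lift:} this is the crux. The $\lfloor r/2\rfloor$-balls around elements of $B_0$ are disjoint because $B_0$ is $r$-independent, but the ``$b$-halves'' of paths of length $r+1$ reach radius $\lceil(r+1)/2\rceil>\lfloor r/2\rfloor$, so the branch sets you describe can genuinely overlap, and when $r$ is odd the midpoint vertex is shared between the two halves and must be either assigned to one side or treated as a subdivision vertex; you then need to pass from a shallow \emph{topological} biclique to a shallow \emph{clique} minor. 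None of this is routine and you have only named it, not resolved it. \emph{Separator enlargement:} ``removing a bounded number of high-degree vertices from $B_0$ and absorbing their shortest-path midpoints into $S$'' is not a well-defined operation. If all you need is an independent set in $H$, Tur\'an on the KST edge bound gives $B_1\subseteq B_0$ directly with $S$ unchanged; if instead (as in the iterated construction, and as in the paper's proof) you do want to add a few ``hub'' vertices to $S$ at each level, you must identify those hubs, bound their number by a function of $r$ alone, and show deleting them actually decreases distances as needed --- this is exactly what the paper's \Cref{lem:realisedtypes} and the $S_{i+1}$ step accomplish cleanly via indiscernibility, and it is where the Ramsey-iteration in the classical proof pays its non-elementary price. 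As written, this part of your argument is a placeholder rather than a proof.
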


The first fixed-parameter algorithms for the \textsc{dominating set} problem
on nowhere dense classes of graphs appeared in~\cite{DawarK09}.  As observed
in~\cite{DawarK09}, uniform quasi-wideness can be made algorithmic in the sense
that the sets $S$ and $B$ can be computed in polynomial time. This can be used
to define bounded search tree algorithms for problems such as \textsc{Dominating
  Set} parameterized by the solution size $k$ as follows. As long as the set $A$
of non-dominated vertices is large enough we are guaranteed to find a
$2$-independent subset $B$ of $A$ of size $k+1$ in $G$ once we removed a constant
size set $S$ of vertices from $G$. As no two elements of $B$ can be dominated by
a single element in $G-S$, it follows that the dominating set must contain an
element of the constant size set $S$. Trying every subset of $S$ as a part of
the dominating set and iterating this procedure until the number of
non-dominated vertices is bounded by a function of the parameter yields a
natural reduction. On the resulting structure one obtains the answer by 
brute force. With a little more effort this technique can be used to
establish fixed-parameter algorithms for many other problems,
see~\cite{DawarK09} for details.

A much more general result was achieved in~\cite{grohe2014deciding}.  Grohe et
al.\@ proved a very general algorithmic meta-theorem stating that first-order
model-checking is fixed parameter tractable on nowhere dense classes of graphs
(with the size of the formula as the parameter). This implies that a very
broad and natural class of algorithmic problems is fixed-parameter tractable on
nowhere dense classes of graphs. Again this proof uses uniform
quasi-wideness in its construction.

More recently nowhere dense classes of graphs have also been
studied in the context of kernelisation.
In~\cite{drange2016kernelization}, it was shown that
\textsc{Dominating Set} and \textsc{Distance-$r$ Dominating Set} admit a linear
kernel on bounded expansion classes and that \textsc{Dominating Set}
admits an almost linear kernel on nowhere dense classes of graphs. A
\emph{distance-$r$ dominating set} is a set $D\subseteq V(G)$ such
that every vertex $u\in V(G)$ has distance at most $r$ to a vertex
from $D$.  However, the techniques used
in~\cite{drange2016kernelization} are not strong enough to show that
also the \textsc{Distance-$r$ Dominating Set} problem admits a polynomial
kernel on nowhere dense classes of graphs. It was shown, however, that
for every class $\CCC$ of graphs which is closed under taking
subgraphs, if $\CCC$ admits a kernel for the \textsc{Distance-$r$ Dominating
  Set} problem
for every value of $r\in \N$, then $\CCC$ must be nowhere dense (under
the assumption $\text{W[$2$]}\neq\text{FPT}$).  These results were
complemented by lower bounds for the closely related \textsc{Connected
  Dominating Set} problem, where we are looking for a dominating set
$D$ which additionally must be connected. It was shown that there
exists a class of bounded expansion which is closed under taking
subgraphs that does not admit a polynomial kernel for
\textsc{Connected Dominating Set} (unless
$\text{NP}\subseteq \text{coNP/poly}$).

\paragraph{Our contributions.} From an algorithmic perspective, the
main problem with the characterisation of nowhere dense classes by
uniform quasi-wideness is that the functions $N$ and $s$ used in the
definition of the class are established by iterated Ramsey arguments
(see \cite{nevsetvril2010first,NesetrilOdM12}).  Therefore, the
function $N$ grows extremely fast and depends non-elementarily on the
size of the excluded cliques in the definition of nowhere
denseness. It follows that fixed-parameter algorithms using uniform
quasi-wideness, such as the algorithms in \cite{DawarK09} mentioned
above, have astronomical parameter dependence making them infeasible
in practice even for very small parameter values.

 Our first main result is to improve the bounds on
uniform quasi-wideness dramatically. In fact, we can show that a class~$\CCC$ is
uniformly quasi-wide with margin $N\colon \N\times \N\rightarrow \N$ if, and only
if, for every $r\in \N$ there is a polynomial $p_r(x)$ such that $\CCC$ is
uniformly quasi-wide with margin $N'(r,m)\leq p_r(m)$.  This is a direct
corollary of the following theorem, which we prove in~\cref{sec:uqw}.

\pagebreak

\newcounter{main_thm}
\setcounter{main_thm}{\value{theorem}}
\newcounter{main_thm_section}
\setcounter{main_thm_section}{\value{section}}
\begin{theorem}\label{thm:uqw}
  Let $\CCC$ be a nowhere dense class of graphs. For every $r\in \N$
  there exists a polynomial~$p_r(x)$ and a constant $s(r)$ such that
  for all $m\in \N$ the following holds. For all $G\in\CCC$ and all 
  sets $A\subseteq V(G)$ of size at least
  $p_r(m)$, there is a set $S\subseteq V(G)$ of size at
  most $s(r)$ such that there is a set $B\subseteq A$ of size at
  least~$m$ which is $r$-independent in $G-S$.

  Furthermore, if $K_c\not\minor_rG$ for all $G\in \CCC$, then
  $s(r)\leq c$ and there is an algorithm, that given an $n$-vertex graph
  $G\in \CCC$, $\epsilon>0$, $r\in \N$ and $A\subseteq V(G)$ of size at least
  $p_r(m)$, computes a set $S$ of size at most $s(r)$ and an
  $r$-independent set $B\subseteq A$ in $G-S$ of size at least $m$ in
  time $\Oof(r\cdot c\cdot |A|^{c+1}\cdot n^{1+\epsilon})$.
\end{theorem}

\begin{corollary}
  A class $\CCC$ of graphs is uniformly quasi-wide with margins
  $N\colon \N\times \N\rightarrow \N$ and $s\colon\N\rightarrow \N$ 
  if, and only if, it is uniformly
  quasi-wide with a polynomial margin 
  $N'(r,m)\leq p_r(m)$ and a 
  margin $s'(r)\leq c$ for a
  polynomial~$p_r(x)$ and a constant $c$ 
  depending on $r$ and $\CCC$ only.
\end{corollary}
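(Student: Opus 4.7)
The plan is to prove the two directions of the equivalence separately. The forward implication is immediate: if $\CCC$ is uniformly quasi-wide with polynomial margin $N'(r,m)\leq p_r(m)$ and constant margin $s'(r)\leq c$, then $N'$ and $s'$ already witness uniform quasi-wideness in the sense of \Cref{def:uqw}. So the real content lies in the converse direction, namely showing that an arbitrary uniformly quasi-wide class automatically admits polynomial margins.

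For the ``only if'' direction, I would chain together the two results already stated in the excerpt. Starting from a uniformly quasi-wide class $\CCC$ with some margins $N$ and $s$, I would first apply the Ne\v{s}et\v{r}il--Ossona de Mendez theorem to conclude that $\CCC$ is nowhere dense. Then \Cref{def:nwd} provides, for every $r\in\N$, a value $c_r := f(r)$ such that $K_{c_r}\not\minor_r G$ for all $G\in\CCC$. Feeding $c_r$ and $r$ into \Cref{thm:uqw} yields exactly the polynomial $p_r$ and the constant bound $s'(r)\leq c_r$ required by the corollary, both depending only on $r$ and on $\CCC$ (through its nowhere-dense witness~$f$).

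I do not anticipate any genuine obstacle here: the corollary is essentially a repackaging of \Cref{thm:uqw} through the bridge of the Ne\v{s}et\v{r}il--Ossona de Mendez equivalence between nowhere denseness and uniform quasi-wideness. The only mild point to check is that the constant appearing as the margin $s'$ in the corollary matches the dependency promised by \Cref{thm:uqw}, namely an integer $c$ with $K_c\not\minor_r G$ for all $G\in\CCC$; the natural choice $c = f(r)$ supplied by \Cref{def:nwd} immediately does the job. Hence no new technical argument is needed beyond quoting the two results.
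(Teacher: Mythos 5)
Your proof is correct and takes essentially the same route the paper intends: one direction is vacuous, and the other is obtained by passing from uniform quasi-wideness to nowhere denseness via the Ne\v{s}et\v{r}il--Ossona de Mendez equivalence and then invoking Theorem~\ref{thm:uqw}, exactly matching the paper's remark that the corollary follows directly from that theorem.
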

 
Compare this result to a result of a similar flavour by Demaine and
Hajiaghayi~\cite{demaine2004equivalence}, stating that a minor closed
class $\CCC$ of graphs has bounded local treewidth if, and only if,
$\CCC$ has linearly bounded local treewidth.

\bigskip The polynomial bounds on the margin of uniformly quasi-wide
classes, and hence nowhere dense classes, give us a new tool to prove
polynomial kernels. As our second main algorithmic result, proved
in~\cref{sec:kernel}, we take a step towards solving an open problem
stated in \cite{drange2016kernelization}, to find an (almost) linear kernel for
the \textsc{Distance-$r$ Dominating Set} problem on nowhere dense
classes of graphs.

\begin{theorem}\label{thm:poly-kernel}
  For every fixed value $r\in \N$, there is a polynomial kernel for the
  \textsc{Distance-$r$ Dominating Set} problem on every nowhere dense class of
  graphs.
\end{theorem}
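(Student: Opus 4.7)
The plan is to build the kernel in two phases, following the classical paradigm of \emph{domination cores} plus \emph{projection equivalence}, made polynomial by \cref{thm:uqw}.

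\textbf{Phase 1: a polynomial-size distance-$r$ domination core.} The goal is to compute, in polynomial time, a set $Z\subseteq V(G)$ of size $\mathrm{poly}(k)$ such that $G$ has a distance-$r$ dominating set of size at most $k$ if and only if there is a set $D\subseteq V(G)$ of size at most $k$ that distance-$r$ dominates every vertex of $Z$. I would construct $Z$ by starting with $Z:=V(G)$ and iteratively removing any vertex $v\in Z$ that is \emph{dominated for free}, in the sense that every size-$k$ set $D\subseteq V(G)$ distance-$r$ dominating $Z\setminus\{v\}$ automatically distance-$r$ dominates $v$ as well. The key claim is that this procedure terminates once $|Z|\le p(k)$ for some polynomial $p$ depending on $r$ and $\CCC$.

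To prove termination at polynomial size, suppose $|Z|\ge p_{2r}(q(k))$ for a suitable polynomial $q$ and the polynomial $p_{2r}$ from \cref{thm:uqw}. Applying \cref{thm:uqw} to $A:=Z$ with distance $2r$ yields a constant-size separator $S\subseteq V(G)$, $|S|\le c$, and a set $B\subseteq Z$ with $|B|\ge q(k)$ that is $2r$-independent in $G-S$. Classify each $b\in B$ by its \emph{$S$-type}, i.e.\@ the function recording, for each $s\in S$, whether $\mathrm{dist}_G(s,b)\le r$ (and similar small auxiliary data). The number of such types is at most $2^{|S|}$, a constant. By pigeonhole, some type contains $\Omega(q(k)/2^c)$ elements of $B$; among these, choosing $q$ large enough, I can identify a vertex $v\in B$ whose role in any size-$k$ dominator is subsumed by its many type-equivalent peers, so $v$ is safely removable. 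Hence as long as $|Z|$ exceeds the threshold, a reduction step is available, and the process halts with $Z$ of polynomial size.

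\textbf{Phase 2: collapsing the rest by projection equivalence.} Given a core $Z$ of size $\mathrm{poly}(k)$, declare two vertices $u,u'\in V(G)$ equivalent if they distance-$r$ dominate exactly the same subset of $Z$, i.e.\@ $\{z\in Z \sth \dist_G(u,z)\le r\}=\{z\in Z\sth \dist_G(u',z)\le r\}$. Only the projection matters for dominating $Z$, so I keep $Z$ together with a bounded number of representatives from each equivalence class, plus enough auxiliary vertices to certify the relevant short paths. The number of equivalence classes is the number of distinct traces of closed distance-$r$ balls on $Z$. Since nowhere dense classes have bounded VC dimension for the distance-$r$ neighbourhood set system (a standard consequence of uniform quasi-wideness, now made effective by \cref{thm:uqw}), the Sauer--Shelah lemma bounds this by $|Z|^{\Oof(1)}=\mathrm{poly}(k)$. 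Finally, I would verify that replacing each class by its representative preserves the answer to distance-$r$ dominating set of parameter $k$, and that the resulting graph remains a subgraph of $G$, hence in $\CCC$.

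\textbf{Main obstacle.} Phase 1 is the hard part: making the shadowing/redundancy argument rigorous enough to show that the iterative removal procedure reduces $|Z|$ whenever it exceeds a fixed polynomial in $k$. This is exactly where the polynomial margin of \cref{thm:uqw} is essential---under the previous non-elementary bounds, the very same strategy would not produce any polynomial kernel. Phase 2 is comparatively routine once bounded VC dimension of distance-$r$ neighbourhoods is in hand, but the precise bookkeeping (how many representatives per class, and how much of their short-path structure must be preserved so that the kernel remains equivalent) needs to be carried out with care.
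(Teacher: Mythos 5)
Your two-phase plan coincides with the paper's: a distance-$r$ domination core obtained by repeatedly deleting redundant core vertices via \cref{thm:uqw}, followed by keeping one representative per trace of distance-$r$ balls on the core. But the crucial removability step of Phase~1 is exactly what you leave unproven, and the classification you propose would not support it as stated. Recording only whether $\dist_G(s,b)\le r$ for each $s\in S$ is too coarse: if $b_j$ is $r$-dominated by some $x\in X$ via a path that meets $S$ in a vertex $s$, with segment lengths $a$ from $x$ to $s$ and $b$ from $s$ to $b_j$ (so $a+b\le r$), then for a merely threshold-equivalent $b_1$ you can only conclude $\dist_G(x,b_1)\le a+r$, which may exceed $r$. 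The paper instead classifies the elements of $B$ by their exact distance vectors to $S$, capped at $2r$ (still only $(2r+1)^{|S|}$ classes, a constant for fixed $r$), and then the transfer works: $\dist_G(s,b_1)=\dist_G(s,b_j)\le b$, hence $\dist_G(x,b_1)\le a+b\le r$. The second missing ingredient is the reason a vertex with many type-equivalent peers is removable at all: take $k+2$ vertices $b_1,\ldots,b_{k+2}$ with the same capped distance vector; by $2r$-independence in $G-S$ the balls $N_r^{G-S}(b_i)$, $i\ge 2$, are pairwise disjoint, so any $X$ with $|X|\le k$ misses at least one of them, say for $b_j$; if $X$ $r$-dominates $Z\setminus\{b_1\}$ it dominates $b_j$ through $S$, and the distance-vector equality transfers domination to $b_1$, which is therefore safe to delete. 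With these two points Phase~1 is complete and stops at a core of size $p_{2r}\bigl((k+2)(2r+1)^{s}\bigr)$, exactly as in the paper.

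Phase~2 is also essentially the paper's (the polynomial bound on the number of traces is \cref{crl:neighbourhoodcomplexity}), but one claim you make should be dropped and one omission filled. The kernel graph is not, and need not be, a subgraph of $G$ lying in $\CCC$: the paper takes $Z\cup Y$ together with shortest paths realising $N_r^H(v)\cap Z=N_r^G(v)\cap Z$ for each representative $v\in Y$, and then attaches a gadget of two fresh vertices joined by paths of length $r$ to all non-core vertices, so that in the new instance only the core effectively needs to be dominated (at the cost of increasing the parameter to $k+1$). Without such a gadget (or an explicitly annotated problem variant), your reduced instance would also force domination of the representatives and the path vertices, and the claimed equivalence with parameter $k$ would not follow.
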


We remark that in \cite{drange2015kernelization} it was already shown that for
classes $\CCC$ that are closed under taking subgraphs, if $\CCC$ admits a kernel
for the \textsc{Distance-$r$ Dominating Set} problem for every $r\in \N$,
then~$\CCC$ must be nowhere dense (under the standard assumption that FPT $\neq$
W[2]). Hence, under this assumption, the theorem implies that a class $\CCC$
which is closed under taking subgraphs admits a kernel for the
\textsc{Distance-$r$ Dominating Set} problem for every $r\in\N$ if, and only if,
it admits a polynomial kernel for every $r\in\N$.

\bigskip As another consequence of~\cref{thm:uqw} we can
dramatically improve the parameter dependence of the
\textsc{Connected Dominating Set} problem. See~\cref{sec:single} for a
proof of the following theorem.

\begin{theorem}
  Let $\CCC$ be a nowhere dense class of graphs. Then there is a
  polynomial $p(x)$ and an algorithm running in time
  $2^{p(k)}\cdot n^{1+\epsilon}$ which, given an $n$-vertex graph $G$,
  $\epsilon>0$ and a
  number~$k$ as input, decides whether $G$ contains a connected
  dominating set of size~$k$.
\end{theorem}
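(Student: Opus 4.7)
We adapt the bounded search tree algorithm for \textsc{Connected Dominating Set} on nowhere dense classes from~\cite{DawarK09}; replacing the original uniform quasi-wideness characterisation by~\autoref{thm:uqw} upgrades its non-elementary parameter dependence to a polynomial one, yielding the bound $2^{\mathrm{poly}(k)} \cdot n^{1+\epsilon}$.

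The central structural observation is that a connected dominating set $D$ with $|D| \leq k$ induces a subgraph of diameter at most $k-1$. Therefore, whenever $S \subseteq V(G) \setminus D$, any two vertices of $G$ dominated by $D$ lie at distance at most $k+1$ in $G - S$: a length-$(k-1)$ path between their respective dominators stays inside $G[D] \subseteq G - S$. The algorithm maintains a partial solution $D' \subseteq V(G)$ and the undominated set $A := V(G) \setminus N[D']$. As long as $|D'| < k$ and $|A|$ exceeds the polynomial threshold $p_{k+1}(2)$ given by~\autoref{thm:uqw} with $r := k+1$, we invoke~\autoref{thm:uqw} on $A$ to obtain a set $S$ of size at most $c$, where $K_c \not\minor_{k+1} G$, and a two-element set $B \subseteq A \setminus S$ that is $(k+1)$-independent in $G - S$. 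The observation forces every CDS $D \supseteq D'$ with $|D| \leq k$ to intersect $S$, since otherwise the two dominators of the vertices of $B$ would connect them by a path of length at most $k+1$ in $G - S$, contradicting the $(k+1)$-independence. We therefore branch on the at most $c$ choices for which vertex of $S$ to add to~$D'$.

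The resulting search tree has depth at most $k$ and arity at most $c$, so it has $c^k = 2^{O(k)}$ leaves, and each invocation of~\autoref{thm:uqw} costs $O(n^{1+\epsilon})$ time by its algorithmic version. At each leaf we either have $|D'| = k$ (and check in linear time whether $D'$ is a CDS) or $|A| \leq p_{k+1}(2) = \mathrm{poly}(k)$; in the latter case the residual problem is to extend $D'$ by at most $k - |D'|$ further vertices that dominate $A$ and make the final set induce a connected subgraph. This residual step is the main obstacle, because the extending vertices may in principle be drawn from anywhere in $G$. We resolve it by enumerating the $2^{O(k \log k)}$ possible abstract tree topologies on at most $k$ vertices and, for each topology, running a final round of branching that again invokes~\autoref{thm:uqw} (with the role of ``undominated vertex'' replaced by ``unrealised tree node'') to locate an embedding which additionally dominates the polynomially many remaining vertices of~$A$. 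Combining these ingredients gives the claimed running time of $2^{\mathrm{poly}(k)} \cdot n^{1+\epsilon}$.
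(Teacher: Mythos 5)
Your plan diverges from the paper's and, as written, it does not give the claimed running time. The fatal point is your choice $r:=k+1$ when invoking \autoref{thm:uqw}. That theorem gives bounds that are polynomial in $m$ only \emph{for each fixed $r$}: both the threshold $p_r(m)$ and the deletion bound $s(r)\le c$ (where $K_c\not\minor_r G$) depend on $r$ in a completely uncontrolled way --- nowhere denseness only provides \emph{some} function $f$ with $K_{f(r)}\not\minor_r G$, and the degree of $p_r$ comes from ladder indices obtained via stability (Adler--Adler), for which the paper has no effective bound. With $r=k+1$, the branching arity $c=s(k+1)$ and the leaf threshold $p_{k+1}(2)$ are therefore functions of $k$ with no polynomial (indeed no elementary) bound, so your claims ``$c^k=2^{O(k)}$'' and ``$p_{k+1}(2)=\mathrm{poly}(k)$'' are unjustified and the bound $2^{\mathrm{poly}(k)}\cdot n^{1+\epsilon}$ does not follow. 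Moreover the algorithmic version of \autoref{thm:uqw} runs in time $\Oof(r\cdot c\cdot|A|^{c+1}\cdot n^{1+\epsilon})$; applied to the full undominated set this is $n^{\Omega(c(k))}$, not $\Oof(n^{1+\epsilon})$ per call, and even after truncating $A$ to the threshold the overhead is an uncontrolled function of $k$. The paper avoids all of this by calling \autoref{thm:uqw} only with $r=1$: it extracts $k+1$ vertices that are $2$-independent in $G-S$ with $|S|\le d_1$ a genuine constant, so no vertex outside $S$ can dominate two of them and any size-$k$ solution must pick up a vertex of $S$; the threshold is $(k+1)^{c(1)}$, honestly polynomial in $k$. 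Note also a logical wrinkle in your branching rule: your distance argument needs the connecting path inside the solution $D$ to avoid $S$, so it only yields $D\cap S\neq\emptyset$, and that vertex could already lie in $D'$, in which case the branch makes no progress; the paper's counting argument does not have this problem, since the $S$-vertex it produces must dominate a currently undominated vertex and hence is new.

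The second gap is the residual step, which is where the real difficulty of \textsc{Connected Dominating Set} sits (recall the problem admits no polynomial kernel on these classes). Once branching stops you must dominate the small leftover set with vertices that may lie anywhere in $G$ \emph{and} make the whole solution connected; your proposal to enumerate abstract tree topologies and ``branch via \autoref{thm:uqw} on unrealised tree nodes'' is not an algorithm --- \autoref{thm:uqw} produces large independent sets after deleting few vertices and says nothing about embedding a prescribed tree or choosing dominators consistently with connectivity. The paper resolves this concretely: it guesses a partition of the small undominated set $W_i$ into at most $k-i$ classes $Y_j$, computes for each the set $D_j$ of vertices dominating all of $Y_j$, attaches a fresh terminal $t_j$ to every vertex of $D_j$ by internally disjoint paths of length three, and runs the Dreyfus--Wagner Steiner tree algorithm on the terminal set $X_i\cup\{t_1,\ldots,t_\ell\}$; an exchange argument shows a minimum Steiner tree uses exactly one gadget path per terminal, so it translates back into a connected dominating set of the right size. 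You need a concrete mechanism of this kind (or another explicit one) for the connectivity-plus-domination extension before your outer loop, repaired as above, yields the theorem.
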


We believe that \cref{thm:uqw} can allow to reduce the parameter
dependence for further fixed-parameter
algorithms on nowhere dense classes such as the algorithms developed
in~\cite{DawarK09} .

We prove our results using tools from a branch of model theory known
as \emph{stability theory}. Stability was introduced by Shelah as a
notion of well-behaved first-order logic theories. In a recent paper,
Malliaris and Shelah \cite{malliaris2014regularity} used the model
theoretic tools to study classes of \emph{stable graphs}. Stable
classes of graphs are much more general than nowhere dense classes of
graphs, but the two concepts coincide on classes of graphs closed
under taking subgraphs~\cite{adler2014interpreting}. The focus of \cite{malliaris2014regularity} is
on proving very strong regularity lemmas for stable graphs. For
obtaining these lemmas they prove a very nice technical lemma,
\cref{thm:extract_indiscernibles} below, on the existence of long
$\Delta$-indiscernible sequences in stable graphs, where $\Delta$ is a
finite set of first-order formulas.  We are going to use
$\Delta$-indiscernible sequences to extract large $r$-independent sets
for properly defined formula sets~$\Delta$.
See~\cref{sec:prelims} for the definition of $\Delta$-indiscernible
sequences.  One of the technical tools we develop in this paper is to
make this lemma of \cite{malliaris2014regularity} algorithmic so that
we can apply it in our algorithms.

We believe that stable classes will be very interesting for future
algorithmic research and may be a good candidate for a generalisation
of nowhere dense classes with good algorithmic properties towards
classes of graphs which are no longer closed under taking subgraphs
(but e.g.\@ are only closed under taking induced subgraphs). Our
technical results here may therefore be of independent interest as a
first step towards understanding the algorithmic context of stable
classes of graphs.

\section{Stability and Indiscernibles}
\label{sec:prelims}

\paragraph{Graphs.} We use standard graph theoretical notation
and refer to~\cite{diestel2012graph} for reference.  Let $G$ be an
undirected graph. We write $N(v)$ for the set of neighbours of a
vertex $v\in V(G)$ and $v$ itself and~$N_r(v)$ for the set of vertices at distance at
most~$r$ from~$v$, again including $v$.  A set $W\subseteq V(G)$ is called
\emph{$r$-independent} in $G$, if all distinct $u,v\in W$ have
distance greater than $r$ in $G$.

\paragraph{Ladder index, branching index and VC-dimension.}
Let $G$ be a directed graph. The \emph{ladder index} of~$G$ is the
largest number $k$ such that there are
$v_1,\ldots, v_k,w_1,\ldots, w_k\in V(G)$ with
\[(v_i, w_j)\in E(G)\;\Leftrightarrow\; i\leq j.\]

The figure shows a graph with ladder index $6$ (imagine the 
edges to be directed from $v_i$ to $w_j$). 
\bigskip
\begin{center}
\begin{tikzpicture}

\foreach \i in {1,...,6}{
  
  \node at (\i,1.3) {$v_\i$};
  \node[vertex] (a\i) at (\i,1){};
  \node[vertex] (b\i) at (\i,0){};
  \node at (\i,-0.3) {$w_\i$};
  
   \foreach \j in {1,...,\i}{
     \draw (b\i) to (a\j);
   }
}
\end{tikzpicture}
\end{center}
\bigskip

If $\tau$ is a word over an alphabet $\Sigma$ and
$a\in \Sigma$, then $\tau\cdot a$ denotes the concatenation of~$\tau$
and $a$.  The \emph{branching index} of $G$ is the largest number
$\ell$ such that there are vertices
$u_{\sigma_1},\ldots, u_{\sigma_{2^\ell}}\in V(G)$, indexed by the
words over the alphabet $\{0,1\}$ of length exactly $\ell$, and
vertices $v_{\tau_1},\ldots, v_{\tau_{2^\ell-1}}$, indexed by the
words over $\{0,1\}$ of length strictly smaller than $\ell$, such that
if $\tau_j\cdot a$ is a (not necessarily proper) prefix of~$\sigma_i$, then
$(u_{\sigma_i},v_{\tau_j})\in E(G)$ if, and only if, $a=1$. The vertices 
$u_{\sigma_1},\ldots, u_{\sigma_{2^\ell}}\in V(G)$ are called the 
\emph{leaves} of the tree, the vertices $v_{\tau_1},\ldots, v_{\tau_{2^\ell-1}}$
are its \emph{inner nodes}. Note that we describe edges only between inner
nodes and leaves. Intuitively, a leaf $u$ is connected to its predecessors~$v$ such
that $u$ is a \emph{right successor} of $v$ and not to its predecessors such that 
it is a \emph{left
successor}, while we make no assumptions on edges between different 
branches.  We call the graph
induced by vertices $u_{\sigma_1},\ldots, u_{\sigma_{2^\ell}}$ and
$v_{\tau_1},\ldots, v_{\tau_{2^\ell-1}}$ a \emph{branching witness} for~$G$ of
index~$\ell$.

\bigskip
The ladder index and branching index of $G$ are closely related, as
shown by the next lemma.

\begin{lemma}[\cite{hodges1993model}, Lemma 6.7.9, p.\
  313]\label{lem:branching}
  Let $G$ be a directed graph.  If $G$ has branching index~$k$, 
  then~$G$ has ladder index smaller than $2^{k+1}$.  If $G$ has
  ladder index $k$, then $G$ has branching index smaller than
  $2^{k+2}-2$.
 \end{lemma}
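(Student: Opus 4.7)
Both directions are proved by contrapositive; starting from one witness I would construct the other explicitly.

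\textbf{Ladder to branching witness.} Suppose we have a ladder of length $n = 2^{k+1}$, with vertices $v_1,\ldots,v_n,w_1,\ldots,w_n$ and $(v_i,w_j)\in E \Leftrightarrow i\leq j$. The plan is to embed a complete binary tree of depth $k+1$ into this ladder: place the $t$-th leaf (read left-to-right, $1\leq t\leq n$) at the vertex $v_{n+1-t}$, and at the internal node whose subtree spans the leaf positions $[t_1,t_2]$, with left half $[t_1,c]$ and right half $[c+1,t_2]$, put the vertex $w_{n-c}$. The ladder condition then automatically realises the required branching pattern: a leaf in the right subtree (smaller $t$-value, hence larger $v$-index) is connected to the assigned $w$, while a leaf in the left subtree is not. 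A short check via $2$-adic valuations of the indices $n-c$ shows that different internal nodes of the tree receive different $w$-vertices, producing a valid branching witness of index $k+1$.

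\textbf{Branching witness to ladder.} For the harder direction I would start with a branching witness of index $\ell = 2^{k+2}-2$ and aim to extract a ladder of length $k+1$. The first move is to restrict to the ``all-ones'' path in the tree: choose the inner nodes $v_{1^d}$ for $d=0,\ldots,\ell-1$ and the ``off-by-one'' leaves $\sigma^{(j)} = 1^j 0^{\ell - j}$ for $j=0,\ldots,\ell-1$. A direct calculation then shows that in the resulting $\ell \times \ell$ bipartite adjacency pattern, the pair $(u_{\sigma^{(j)}}, v_{1^d})$ has its edge relation forced by the branching axioms exactly when $d\leq j$, in which case it is an edge iff $d<j$; for $d>j$ the branching axioms place no constraint and the adjacency in $G$ can be either value. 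Thus the lower-triangular half of the intended ladder matrix is already correct ``for free'', and the only remaining task is to make the upper-triangular half uniform.

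\textbf{Main obstacle.} The tight exponential bound $\ell = 2^{k+2}-2$, as opposed to the far weaker double-exponential bound that a generic application of Ramsey's theorem on $2$-colourings of pairs would give, is where the real work lies. My plan is to exploit the tree structure of the branching witness directly: descend through the tree one level at a time, at each descent using the choice between the two children of the current node as one bit of colour information, and accumulate a chain of positions on which the uncontrolled entries are forced to agree. Because every ladder element costs one bit of colour plus a small constant overhead for the final ``strictness shift'', building a ladder of length $k+1$ consumes about $2\cdot 2^{k+1}=2^{k+2}$ levels of the tree, matching the stated bound $2^{k+2}-2$. Once a monochromatic restriction of size $k+2$ is obtained, the forced entries together with the now-uniform uncontrolled ones form a strict (anti-)ladder of length $k+2$, and a single reindexing shift then converts this into a standard ladder of length $k+1$.
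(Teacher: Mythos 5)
Since the paper cites this lemma directly from Hodges without reproducing a proof, I am evaluating your argument on its own merits rather than against a paper-internal proof.

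Your first direction (ladder $\Rightarrow$ branching witness) is correct. Placing the $t$-th leaf at $v_{n+1-t}$ and the internal node with split point $c$ at $w_{n-c}$ does realise the required pattern: a leaf at position $t>c$ maps to $v_{n+1-t}$ with $n+1-t\leq n-c$, hence is adjacent to $w_{n-c}$, while a leaf at position $t\leq c$ is not. The split points of a complete binary tree over $\{1,\dots,2^{k+1}\}$ are exactly $\{1,\dots,2^{k+1}-1\}$, each occurring once, so the assignments $w_{n-c}$ are pairwise distinct; this is what your $2$-adic valuation remark amounts to, and it is fine. This produces a branching witness of index $k+1$ from a ladder of length $2^{k+1}$, which is precisely the contrapositive of the first claim.

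The second direction has a genuine gap. After restricting to the all-ones path and the leaves $\sigma^{(j)}=1^j0^{\ell-j}$, you correctly observe that only the entries with $d\leq j$ are forced; for $d>j$ the adjacency $(u_{\sigma^{(j)}},v_{1^d})$ is entirely unconstrained. At this point you have thrown away the rest of the tree and are left with an $\ell\times\ell$ matrix whose strictly-lower part is a ladder pattern and whose strictly-upper part is arbitrary. Making the upper part uniform by passing to a subsequence is exactly a $2$-colouring-of-pairs Ramsey problem, which yields a \emph{doubly} exponential bound on $\ell$, not $2^{k+2}-2$ --- you acknowledge this yourself. The ``main obstacle'' paragraph does not repair this: the adaptive ``descend through the tree, choosing a child at each level'' plan is incompatible with the all-ones path you already committed to, and no invariant is stated for what the accumulated chain is or why the descent forces the uncontrolled entries to agree (the branching axioms say nothing about a leaf's adjacency to inner nodes that are not its ancestors, so nothing is ``forced''). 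The cost accounting also does not cohere: ``one bit of colour per ladder element'' would give a cost linear in $k$, not $2^{k+2}$; the stated bound only emerges from a recursion in which each additional ladder element roughly \emph{doubles} the required tree depth (something of the form $N(m+1)\leq 2N(m)+O(1)$), and you have not exhibited the step that produces this doubling. To complete this direction you need a genuinely recursive extraction on the tree itself --- maintaining a partial ladder and, at each stage, spending one of the two subtrees below the current node to control the so-far-unconstrained adjacencies --- rather than a one-shot restriction to a fixed path followed by Ramsey.
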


 We come to the definition of VC-dimension.  Let $A$ be a set and let
 $\FFF\subseteq \PPP ow(A)$ be a family of subsets of $A$. For a set
 $X\subseteq A$ let
\[X\cap \FFF\coloneqq\{X\cap F : F\in \FFF\}.\] 

The set $X$ is \emph{shattered by $\FFF$} if
\[X\cap \FFF=\PPP ow(X).\] 

The \emph{Vapnik-Chervonenkis-dimension}, short \emph{VC-dimension},
of $\FFF$ is the maximum size of a set $X$ that is shattered by
$\FFF$. Note that if $X$ is shattered by $\FFF$, then every
subset of $X$ is shattered by~$\FFF$.

The following theorem was first proved by Vapnik and
Chervonenkis~\cite{vapnik2015uniform}, and independently by
Sauer~\cite{sauer1972density} and
Shelah~\cite{shelah1972combinatorial}. It is often called the
Sauer-Shelah-Lemma in the literature.

\begin{theorem}[Sauer-Shelah-Lemma]\label{thm:sauer_shelah}
  If $\abs{A}\leq m$ and $\FFF \subseteq \PPP ow(A)$ has VC-dimension $d$,
  then
  \[\abs{\FFF}\leq \sum_{i=0}^{d}\binom{m}{i}\leq m^d+1.\]
\end{theorem}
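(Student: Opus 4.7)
The plan is to prove the Sauer–Shelah Lemma by induction on $\abs{A}=m$, using the standard decomposition trick that splits $\FFF$ into two families on a smaller ground set whose VC-dimensions can be controlled. The base case $m=0$ is immediate since any non-empty $\FFF \subseteq \PPP ow(\emptyset)$ is just $\{\emptyset\}$, and the right-hand side is $\binom{0}{0}=1$.

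For the inductive step, I would fix an arbitrary element $a\in A$, set $A'\coloneqq A\setminus\{a\}$, and define two subfamilies of $\PPP ow(A')$:
\[
\FFF_1 \coloneqq \{\,F\setminus\{a\} \sth F\in\FFF\,\}, \qquad
\FFF_2 \coloneqq \{\,F\in\FFF \sth a\notin F \text{ and } F\cup\{a\}\in\FFF\,\}.
\]
A quick bijective bookkeeping shows $\abs{\FFF}=\abs{\FFF_1}+\abs{\FFF_2}$: the elements of $\FFF_1$ count the distinct ``projections'' of sets in $\FFF$ to $A'$, and $\FFF_2$ counts exactly those projections that arise from two sets in $\FFF$ (one containing $a$ and one not). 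Clearly $\FFF_1$ has VC-dimension at most $d$. The key observation is that $\FFF_2$ has VC-dimension at most $d-1$: if $X\subseteq A'$ were shattered by $\FFF_2$, then every subset of $X$ would appear as some $F\in\FFF_2$, and by definition $F\cup\{a\}\in\FFF$ as well, so $X\cup\{a\}$ would be shattered by $\FFF$, contradicting $\dim_{\mathrm{VC}}(\FFF)=d$.

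Applying the induction hypothesis to both families on the $(m-1)$-element set $A'$ and using Pascal's identity $\binom{m-1}{i}+\binom{m-1}{i-1}=\binom{m}{i}$ gives
\[
\abs{\FFF} \;\leq\; \sum_{i=0}^{d}\binom{m-1}{i} + \sum_{i=0}^{d-1}\binom{m-1}{i} \;=\; \binom{m-1}{0} + \sum_{i=1}^{d}\left[\binom{m-1}{i}+\binom{m-1}{i-1}\right] \;=\; \sum_{i=0}^{d}\binom{m}{i},
\]
which closes the induction. The only genuinely delicate point is verifying the VC-dimension bound on $\FFF_2$ — this is where the combinatorial content of shattering enters — but it falls out directly from the definition of $\FFF_2$ once one notices that membership in $\FFF_2$ encodes exactly the ``both $F$ and $F\cup\{a\}$ are realised'' condition needed to lift a shattering of $X$ to a shattering of $X\cup\{a\}$. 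Everything else is routine binomial manipulation.
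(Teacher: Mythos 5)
Your proof is correct, but there is nothing in the paper to compare it against: the Sauer--Shelah Lemma is stated as a cited classical result (attributed to Vapnik--Chervonenkis, Sauer, and Shelah) and the paper provides no proof of its own. What you have written is the standard proof by induction on $\abs{A}$ via the split $\abs{\FFF}=\abs{\FFF_1}+\abs{\FFF_2}$ with $\FFF_1$ the projection to $A'$ and $\FFF_2$ the doubly-realised projections, and it is correct in all essentials: the counting identity is right, the VC-dimension bound on $\FFF_1$ follows since a shattering of $X\subseteq A'$ by $\FFF_1$ lifts to a shattering by $\FFF$, and the crucial bound $\dim_{\mathrm{VC}}(\FFF_2)\le d-1$ is justified exactly as you say, with $\FFF_2\subseteq\FFF$ witnessing the $a\notin Y$ case and $F\cup\{a\}\in\FFF$ witnessing the $a\in Y$ case when showing $X\cup\{a\}$ is shattered by $\FFF$. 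Two small points of polish: you do induction with $m=\abs{A}$ while the statement only assumes $\abs{A}\le m$, so one should remark that $\sum_{i=0}^{d}\binom{m}{i}$ is nondecreasing in $m$ to close that gap; and one should say a word about the degenerate case $d=-1$ (equivalently $\FFF=\emptyset$) so that the induction hypothesis applied to $\FFF_2$ makes sense, though the empty sum convention handles it. Neither affects correctness.
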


The VC-dimension of an undirected graph $G$ is the VC-dimension of the
family of sets \[\FFF=\{N(v) : v\in V(G)\}.\]

\bigskip
\paragraph{First-order logic.}
For extensive background on first-order logic, we refer the reader
to~\cite{hodges1993model}. For our purpose, it suffices to define
first-order logic over the vocabulary of graphs (with constant symbols
from a given parameter set).
 
Let $A$ be a set. We call $\LA\coloneqq\{E\hspace{0.3mm}\}\mathop{\cup} A$ the \emph{vocabulary}
of graphs with parameters from $A$. \emph{First-order formulas} over $\LA$ are
formed from atomic formulas~$x=y$ and $E(x,y)$, where $x,y$ are variables (we
assume that we have an infinite supply of variables) or elements of $A$ treated
as constant symbols, by the usual Boolean
connectives~$\neg$~(negation),~$\wedge$ (conjunction), and~$\vee$ (disjunction)
and existential and universal quantification~$\exists x,\forall x$,
respectively.  The free variables of a formula are those not in the scope of a
quantifier, and we write~$\phi(x_1,\ldots,x_k)$ to indicate that the free
variables of the formula~$\phi$ are among $x_1,\ldots,x_k$. We often
abbreviate a tuple $(x_1,\ldots, x_k)$ by $\bar{x}$ and let
the context determine the length of a tuple $\bar{x}$. 

To define the semantics, we inductively define a satisfaction
relation~$\models$. Let $G$ be a graph and $A\subseteq V(G)$. For an
$\LA$-formula~$\phi(x_1,\ldots,x_k)$, and
$v_1,\ldots,v_k\in V(G)$, $G\models\phi(v_1,\ldots,v_k)$
means that~$G$ satisfies~$\phi$ if the free variables~$x_1,\ldots,x_k$
are interpreted by~$v_1,\ldots,v_k$ and the parameters $a\in A$
(formally treated as constant symbols) used in the formula are
interpreted by the corresponding element of $A$ in $G$, respectively. If
$\phi(x_1,x_2)=E(x_1,x_2)$ is atomic, then $G\models\phi(v_1,v_2)$
if~$(v_1,v_2)\in E(G)$. The meaning of the equality symbol, the
Boolean connectives, and the quantifiers is as expected. For a
formula $\phi(x_1,\ldots, x_k, y_1,\ldots, y_\ell)$ and
$v_1,\ldots, v_\ell\in V(G)$ (treated as a sequence of parameters), we
write $\phi(x_1,\ldots, x_k, v_1,\ldots, v_\ell)$ for the formula with
free variables $x_1,\ldots, x_k$ where each occurrence of the variable
$y_i$ in $\phi$ is replaced by the constant symbol $v_i$.

Let $\Delta$ be a set of formulas, let $G$ be a graph and let
$A\subseteq V(G)$. Let $v\in V(G)$. The \emph{$\Delta$-type of
  vertex~$v$ in $G$ over the parameters $A$} is the set
\begin{align*}
  \mathrm{tp}_\Delta(G, A, v) & \coloneqq  \{ \phi(x_1,v_1,\ldots, v_k) :
                                \phi(x_1,y_1,\ldots, y_k)\in \Delta,
                                v_1,\ldots, v_k\in A,
                                G\models\phi(v,v_1,\ldots, v_k)\}
  \\
                              & \hspace{1pt}\ \cup  \{ \neg \phi(x_1,v_1,\ldots, v_k) : \phi(x_1,y_1,\ldots, y_k)\in
                                \Delta, v_1,\ldots, v_k\in A, G\not\models\phi(v,v_1,\ldots, v_k)\}.
\end{align*}
The set of \emph{$\Delta$-types realised} in $G$ over $A$ is the set
$S_\Delta(G,A) \coloneqq \{ \mathrm{tp}_\Delta(G, A, v) \sth v\in V(G) \}$.

\begin{example}\label{ex:types}
  Let $\Delta^r$ be the set consisting of the single formula
  $\phi_r(x_1,y_1)$, stating that the elements~$x_1$ and $y_1$ have
  distance at most $r$ in a graph $G$. Let $G$ be a graph and let
  $A\subseteq V(G)$.
  \begin{enumerate}
  \item For $v\in V(G)$, we can identify
    $\mathrm{tp}_{\Delta^r}(G,A,v)$ with $N_r(v)\cap A$, in the sense
    that $a\in N_r(v)\cap A$ if, and only if, the formula
    $\phi_r(x_1, a)\in \mathrm{tp}_{\Delta^r}(G,A,v)$.
  \item If $G$ is an arbitrary graph, then we can have
    $\abs{S_{\Delta^1}(G,A)}=2^{\abs{A}}$.  If $G$ has VC-dimension~$k$, then
    $\abs{S_{\Delta^1}(G,A)}\leq \abs{A}^k+1$ according to the
    Sauer-Shelah-Lemma. If $G$ comes from a nowhere dense class of
    graphs and $\epsilon>0$, then we
    have $\abs{S_{\Delta^1}(G,A)}\leq \abs{A}^{1+\epsilon}$ for all sufficiently 
    large~$A$. This follows from Lemma 4.11(2)
    of~\cite{gajarsky2017kernelization}.
  \end{enumerate}
\end{example}

Using tools from stability theory, a much more general result can be
established (\cref{thm:shelah1} below).

\bigskip
\paragraph{First-order interpretations and stability.}
Let $\phi(x_1,\ldots, x_k)$ with $k\geq 2$ be a first-order formula
and let~$G$ be a graph.  For every \emph{ordered partition}
$(x_{i_1},\ldots, x_{i_\ell}), (x_{i_{\ell+1}},\ldots, x_{i_k})$ of
the variables $x_1,\ldots, x_k$ we define a directed graph
$G_{\phi((x_{i_1},\ldots, x_{i_\ell}), (x_{i_{\ell+1}},\ldots,
  x_{i_k}))}$
which has as its vertices the \mbox{$\ell$-tuples} $(v_1,\ldots, v_\ell)$ 
and $k-\ell$-tuples $(v_{\ell+1},\ldots, v_k)$ of
vertices of $G$ and all arcs
$((v_1,\ldots, v_\ell), (v_{\ell+1},\ldots, v_k))$ such that
$G\models\phi(v_{i_1},\ldots, v_{i_k})$. If the variable partition is
of relevance, we will always denote the formula as
$\phi((x_{i_1},\ldots, x_{i_\ell}),$ $(x_{i_{\ell+1}},\ldots,
x_{i_k}))$.

\begin{example}
  Let $\phi_r(x_1,x_2)$ be the formula from~\cref{ex:types} (where we
  renamed the variables to match the above definition of
  interpretations). Then $G_{\phi_r(x_1,x_2)}$ has the same vertex set as~$G$
  and any two vertices are joined by an edge in $G_{\phi_r(x_1,x_2)}$
  if, and only if, their distance is at most~$r$ in~$G$.
\end{example}

A formula
$\phi((x_{i_1},\ldots, x_{i_\ell}), (x_{i_{\ell+1}},\ldots, x_{i_k}))$
with an ordered partition of its free variables is \emph{stable on a class
  $\CCC$} of graphs, if there is an integer $s$ such that for every
graph $G\in \CCC$ the graph
$G_{\phi((x_{i_1},\ldots, x_{i_\ell}), (x_{i_{\ell+1}},\ldots,
  x_{i_k}))}$
has ladder index at most $s$.  A class $\CCC$ of graphs is
\emph{stable} if every formula with every partition of its free
variables is stable on $\CCC$.

As proved by Adler and Adler in \cite{adler2014interpreting} (based on work 
of Podewski and Ziegler~\cite{podewski1978stable}), stable
classes properly extend the concept of nowhere dense classes.

\begin{theorem}[\cite{adler2014interpreting}]\label{thm:ndstable}
  Every nowhere dense class of graphs $\CCC$ is stable, that is, for
  every formula
  $\phi((x_{i_1},\ldots, x_{i_\ell}), (x_{i_{\ell+1}},\ldots,
  x_{i_k}))$
  with every ordered partition of its free variables there is an integer $s$
  such that for every graph $G\in \CCC$ the graph
  $G_{\phi((x_{i_1},\ldots, x_{i_\ell}), (x_{i_{\ell+1}},\ldots,
    x_{i_k}))}$ has ladder index at most $s$.
\end{theorem}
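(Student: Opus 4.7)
The plan is to argue by contradiction: suppose some formula $\phi$ with partition $((x_{i_1},\ldots,x_{i_\ell}),(x_{i_{\ell+1}},\ldots,x_{i_k}))$ has unbounded ladder index on $\CCC$. The goal is to manufacture from this arbitrarily large clique minors at a fixed depth $r = r(\phi)$ in graphs of $\CCC$, contradicting the nowhere denseness of $\CCC$.

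The first step is to apply Gaifman's locality theorem to rewrite $\phi$ as a Boolean combination of $r$-local formulas (whose truth on a tuple depends only on its $r$-neighbourhood) and basic local sentences (which depend only on the ambient graph). The basic local sentences contribute only finitely many Boolean parameters per graph, so after passing to an infinite subclass of $\CCC$ on which they stabilise I may assume $\phi$ itself is a Boolean combination of $r$-local formulas, for some $r$ depending only on $\phi$.

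Next, consider arbitrarily long ladders $\bar a_1,\ldots,\bar a_s,\bar b_1,\ldots,\bar b_s$ realised in $G_s\in\CCC$. For an $r$-local formula, the truth value on $(\bar a_i,\bar b_j)$ depends only on the bounded-quantifier-rank type of the $r$-ball around the combined tuple. By an iterated Ramsey argument one can pass to a sub-ladder of any desired length on which the types of the $\bar a_i$'s, the types of the $\bar b_j$'s, and the types of the combined pairs are pairwise uniform. If on this sub-ladder every pair of $r$-balls of $\bar a_i$ and $\bar b_j$ were disjoint, the truth of $\phi$ would depend only on the individual types and would therefore not change with $i \leq j$ versus $i > j$, contradicting the ladder pattern. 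Hence many pairs $(i,j)$ must have the property that some coordinate of $\bar a_i$ lies within distance at most $2r$ of some coordinate of $\bar b_j$; a further Ramsey cleanup fixes a coordinate pair $(p,q) \in \{1,\ldots,\ell\}\times\{1,\ldots,k-\ell\}$ and a distance $d \leq 2r$ such that $\dist_{G_s}(a_i^p,b_j^q) \leq d$ precisely when $i \leq j$.

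This is a half-graph of arbitrary size realised by distance-$d$ adjacency in $G_s$. Restricting to the first $t$ of the $a^p$-coordinates and the last $t$ of the $b^q$-coordinates yields a copy of $K_{t,t}$ in the $d$-th power of $G_s$, and a standard construction of branch sets from the short witness paths turns this into a $K_{t,t}$, and hence a $K_{t+1}$, shallow minor of $G_s$ at a depth bounded by a function of $d$. Since $t$ may be taken arbitrarily large while this depth stays fixed, the nowhere denseness of $\CCC$ is contradicted. The main obstacle in carrying out the plan is the Ramsey bookkeeping: one must simultaneously homogenise the types of $\bar a_i$, $\bar b_j$, and the combined pair, and then single out one coordinate pair and one distance bound, while keeping the surviving sub-ladder's length growing with $s$. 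The tuple case $k > 2$ adds further pigeonhole layers but is ultimately routine; once these are handled, the extraction of the shallow clique minor from the resulting half-graph is essentially classical.
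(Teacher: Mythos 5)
You should first note that the paper contains no proof of this statement at all: it is imported from Adler and Adler, whose argument goes through Podewski and Ziegler's infinitary, model-theoretic theorem that superflat graphs are stable. So your attempt is a genuinely different, finitary route -- and it has a real gap at its crucial final step. The Gaifman/Ramsey part is fine as far as it goes (modulo the closeness threshold being $2r+1$ rather than $2r$): after homogenising the bounded-rank types of the $r$-balls of the $\bar a_i$ and of the $\bar b_j$, the disjoint-union composition argument shows that all \emph{far} pairs receive one and the same truth value, hence one of the two order classes ($i\le j$ or $i>j$) consists entirely of close pairs, and a further Ramsey step fixes a coordinate pair $(p,q)$ and a distance $d$ with $\dist(a_i^p,b_j^q)\le d$ for all pairs in that class. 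But your claim that the cleanup yields $\dist(a_i^p,b_j^q)\le d$ \emph{precisely} when $i\le j$ does not follow: nothing in the argument forbids the pairs of the other order class from also being close (even via the same coordinates), since locality only constrains far pairs and Ramsey cannot manufacture non-adjacencies. What you actually obtain is one-sided: a large family of pairs all within distance $d$, i.e.\ a biclique in the $d$-th power of $G_s$, and this is no contradiction -- all the vertices $a_i^p,b_j^q$ could simply lie within distance $d$ of a single hub.

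The second, independent failure is the step ``a standard construction of branch sets from the short witness paths turns this into a $K_{t,t}$ shallow minor''. Bicliques in power graphs do \emph{not} yield shallow biclique minors: the square of the star $K_{1,n}$ contains $K_{\lfloor n/2\rfloor,\lceil n/2\rceil}$, yet the star is a tree and has no $K_{3,3}$ minor at any depth, because all witness paths run through the hub. Converting a genuine distance-$d$ half-graph (with both the positive and the negative direction) into a large clique minor at depth $O(d)$ is a substantive theorem, not a routine branch-set construction; one must use the non-edges for $i>j$ to control overlaps, typically by first deleting a bounded set of hub-like vertices -- exactly the kind of step this paper itself performs in the proof of \autoref{thm:uqw} and \sref{Lemma}{lem:realisedtypes}, where the set $S$ of vertices adjacent to all of an indiscernible sequence is removed -- or by invoking uniform quasi-wideness. (A smaller issue: the coordinates $a_i^p$ need not be pairwise distinct vertices, so you do not even automatically have $2t$ distinct roots for branch sets.) In short, the reduction from unbounded ladder index to ``many close pairs'' is sound, but the passage from close pairs to a contradiction with nowhere denseness -- the heart of the matter, where hubs and overlapping witness paths must be tamed -- is missing.
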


Note that the converse is not true, as, e.g., the class of complete
graphs is stable but not nowhere dense. In particular, stable classes
are possibly not closed under taking subgraphs. We remark that
in~\cite{malliaris2014regularity} a stable class of graphs is defined as
a class $\CCC$ such that the ladder index of every graph $G\in \CCC$
is bounded by a constant $s$ depending only on $\CCC$ (that is,
\cite{malliaris2014regularity} does not demand closure under
interpretations).

The following theorem is easily proved using the Sauer-Shelah Lemma.

\begin{theorem}[see \cite{shelah1990classification}, Theorem II.4.10(4) and 
II.4.11(4), p.74]\label{thm:shelah1}
Let $\CCC$ be a stable class of graphs and let $\Delta$ be a finite
set of first-order formulas. There exists an integer $s$ such that for
all $G\in \CCC$ and all $A\subseteq V(G)$ with $\abs{A}\geq 2$ it holds
that $\abs{S_\Delta(G,A)}\leq \abs{A}^s$.
\end{theorem}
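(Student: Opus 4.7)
The plan is to reduce the bound on the number of types to a bound on VC-dimension, and then apply the Sauer-Shelah Lemma (\autoref{thm:sauer_shelah}). The bridge from stability to VC-dimension is the elementary observation that bounded ladder index implies bounded VC-dimension, which is where the content of the proof lies.

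First I would reduce to the case of a single formula. A $\Delta$-type over $A$ is determined by the tuple of its $\{\phi\}$-restrictions as $\phi$ ranges over $\Delta$, so $\abs{S_\Delta(G,A)}\leq\prod_{\phi\in\Delta}\abs{S_{\{\phi\}}(G,A)}$. Since $\Delta$ is finite, it suffices to produce a polynomial bound $\abs{S_{\{\phi\}}(G,A)}\leq\abs{A}^{s_\phi}$ for each individual $\phi$ and then sum the exponents to obtain $s$.

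Now fix $\phi(x_1,y_1,\ldots,y_k)\in\Delta$. The $\{\phi\}$-type of $v\in V(G)$ over $A$ is captured by the subset $T_v\coloneqq\Seq{\bar a\in A^k}{G\models\phi(v,\bar a)}$ of $A^k$, so $\abs{S_{\{\phi\}}(G,A)}=\abs{\Seq{T_v}{v\in V(G)}}$; call this family $\FFF\subseteq\PPP ow(A^k)$. I claim that the VC-dimension of $\FFF$ is at most the ladder index of the interpretation graph $G_{\phi((x_1),(y_1,\ldots,y_k))}$. Indeed, suppose $\bar a_1,\ldots,\bar a_d\in A^k$ are shattered by $\FFF$. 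For each $i\in\{1,\ldots,d\}$ the subset $\{\bar a_i,\bar a_{i+1},\ldots,\bar a_d\}$ is realised by some $v_i\in V(G)$, meaning $G\models\phi(v_i,\bar a_j)$ if, and only if, $i\leq j$. By definition of the interpretation graph this gives $((v_i),\bar a_j)\in E(G_{\phi((x_1),(y_1,\ldots,y_k))})$ iff $i\leq j$, which is a ladder of length $d$.

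By the stability of $\CCC$ applied to $\phi$ with the partition $((x_1),(y_1,\ldots,y_k))$, there is a uniform constant $t_\phi$ bounding the ladder index of $G_{\phi((x_1),(y_1,\ldots,y_k))}$ for every $G\in\CCC$. Hence the VC-dimension of $\FFF$ is at most $t_\phi$, and \autoref{thm:sauer_shelah} applied to $\FFF$ with ground set $A^k$ of cardinality $\abs{A}^k$ yields $\abs{\FFF}\leq\sum_{i=0}^{t_\phi}\binom{\abs{A}^k}{i}\leq\abs{A}^{s_\phi}$ for a suitable constant $s_\phi$ whenever $\abs{A}\geq 2$. The substantive step is the VC-dim/ladder comparison above; everything else is essentially bookkeeping, so I do not anticipate a serious obstacle, only care in verifying that the quantitative Sauer-Shelah estimate absorbs the low-order terms into the exponent.
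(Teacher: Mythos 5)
Your proof is correct and follows exactly the route the paper intends (it only cites Shelah and remarks that the theorem ``is easily proved using the Sauer-Shelah Lemma''): you reduce to a single formula, bound the VC-dimension of the family of sets $T_v\subseteq A^k$ by the ladder index of $G_{\phi((x_1),(y_1,\ldots,y_k))}$ via the suffix-sets-give-a-ladder observation, and finish with \autoref{thm:sauer_shelah}. The only loose end is the trivial case of formulas in $\Delta$ with no parameter variables, which contribute a factor of at most $2\leq\abs{A}$ and are absorbed by the hypothesis $\abs{A}\geq 2$.
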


\begin{corollary}\label{crl:neighbourhoodcomplexity}
  Let $\CCC$ be a nowhere dense class of graphs and let $r$ be a
  positive integer. There is an integer~$s$ such that for all
  $G\in \CCC$ and all $A\subseteq V(G)$ it holds that
  \[\abs{\{N_r(v)\cap A : v\in V(G)\}}\leq \abs{A}^s.\]
\end{corollary}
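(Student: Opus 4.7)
The plan is to deduce this directly from the three results already assembled: \autoref{thm:ndstable} (nowhere dense classes are stable), \autoref{thm:shelah1} (on stable classes, the number of $\Delta$-types over $A$ is polynomial in $|A|$ for finite $\Delta$), and the observation from \autoref{ex:types}(1) that distance-$r$ neighbourhoods are captured by $\Delta$-types for an appropriate $\Delta$.

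Concretely, I would let $\phi_r(x_1,y_1)$ be the first-order formula expressing that $x_1$ and $y_1$ have distance at most $r$ in a graph. Such a formula exists over the vocabulary of graphs (for fixed $r$ it is a finite disjunction over the existence of a path of length at most $r$). Set $\Delta = \Delta^r \coloneqq \{\phi_r(x_1,y_1)\}$, which is finite.

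By \autoref{thm:ndstable}, the class $\CCC$ is stable, so \autoref{thm:shelah1} applies to the finite set $\Delta^r$: there is an integer $s$, depending only on $\CCC$ and $r$, such that $|S_{\Delta^r}(G,A)| \leq |A|^s$ for every $G\in\CCC$ and every $A\subseteq V(G)$ with $|A|\geq 2$. Now, as noted in \autoref{ex:types}(1), for every $v\in V(G)$ the $\Delta^r$-type $\mathrm{tp}_{\Delta^r}(G,A,v)$ is determined by, and determines, the set $N_r(v)\cap A$: the positive atomic statements $\phi_r(x_1,a)$ in the type are exactly those $a\in A$ that lie in $N_r(v)$, and the negated ones are the remaining elements of $A$. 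Hence the map $v\mapsto N_r(v)\cap A$ factors through the assignment $v\mapsto \mathrm{tp}_{\Delta^r}(G,A,v)$, so
\[
\bigl|\{N_r(v)\cap A \sth v\in V(G)\}\bigr| \;\leq\; |S_{\Delta^r}(G,A)| \;\leq\; |A|^s.
\]
The trivial cases $|A|\in\{0,1\}$ give at most $2$ distinct intersections, which can be absorbed by enlarging $s$ (or by handling them separately).

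I do not expect a real obstacle here; the statement is essentially a translation of \autoref{thm:shelah1} through \autoref{thm:ndstable} using the language of \autoref{ex:types}. The only minor subtlety worth checking is that the formula $\phi_r$ expressing distance at most $r$ is indeed first-order for each fixed $r$, which is immediate since one can write a path of length at most $r$ as an existentially quantified conjunction of edge atoms.
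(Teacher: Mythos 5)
Your proof is correct and is exactly the route the paper intends: identify $N_r(v)\cap A$ with the $\Delta^r$-type of $v$ over $A$ (as the paper already points out in \sref{Example}{ex:types}(1)), then compose \autoref{thm:ndstable} with \autoref{thm:shelah1} for the singleton set $\Delta^r=\{\phi_r\}$. The only thing you add beyond what the paper makes explicit is the remark about the trivial cases $\abs{A}\in\{0,1\}$, which is a reasonable observation since \autoref{thm:shelah1} assumes $\abs{A}\geq 2$.
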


\bigskip
\paragraph{Indiscernible sequences.}
Let $G$ be a graph and let $\Delta$ be a set of formulas. A sequence
$(v_1,\ldots, v_\ell)$ of vertices of $G$ is
\emph{$\Delta$-indiscernible} if for every formula
$\phi(x_1,\ldots, x_k)\in \Delta$ with $k$ free variables and any two
increasing sequences
$1\leq i_1<\ldots <i_k\leq \ell, 1\leq j_1< \ldots< j_k\leq \ell$ of
integers, it holds that
\[G\models\phi(v_{i_1},\ldots, v_{i_k})\Leftrightarrow G\models\phi(v_{j_1},
\ldots, v_{j_k}).\]

The following theorem forms the basis of our construction. The proof
follows immediately from Theorem 3.5, Item (2) of
\cite{malliaris2014regularity} and parallels that proof, 
however, we provide a proof of the theorem because we will 
provide a precise analysis for the nowhere dense case in \cref{sec:uqw}.

\begin{theorem}\label{thm:extract_indiscernibles}
  Let $\CCC$ be a stable class of graphs and let $\Delta$ be a finite
  set of first-order formulas.  There is a polynomial $p(x)$ such that
  for all $G\in \CCC$, every positive integer $m$ and every sequence
  $(v_1,\ldots, v_\ell)$ of vertices of $G$ of length $\ell=p(m)$, there
  exists a sub-sequence $(v_{i_1},\ldots, v_{i_m})$ of
  $(v_1,\ldots, v_\ell)$ of length~$m$ which is
  $\Delta$-indiscernible, $1\leq i_1<\ldots <i_m\leq \ell$.

  Furthermore, there is an algorithm that given an $n$-vertex graph $G\in \CCC$
  and a sequence $(v_1,\ldots, v_\ell)\subseteq V(G)$, computes a
  $\Delta$-indiscernible sub-sequence of $(v_1,\ldots, v_\ell)$ of length at
  least $m$. The running time of the algorithm is in
  $\Oof(\abs{\Delta}\cdot k \cdot \ell^{k+1} \cdot n^{q}\cdot a(n)\cdot \lambda(\Delta))$, where $k$
  is the maximal number of free variable, $q$ is the maximal quantifier-rank of
  a formula of $\Delta$, $a(n)$ is the time required to test adjacency between two vertices and
  $\lambda(\Delta)$ is the length of a longest formula in $\Delta$.    
\end{theorem}
\begin{proof}
  Let $G\in \CCC$ and let $(v_1,\ldots, v_\ell)$ be a sequence of
  vertices of $G$. We prove that for every formula
  $\phi(x_1,\ldots, x_k)\in \Delta$ there is a
  $\{\phi\}$-indiscernible subsequence of length at least $f^k(\ell)$,
  where $f(\ell)=\left(\frac{\ell}{t}\right)^{\frac{1}{tr+t+1}}-k$ for
  constants $r,t$ depending only on $\phi$ and $\CCC$. The claim of
  the theorem then follows by iteratively extracting
  $\{\phi\}$-indiscernible sequences for all $\phi\in \Delta$ and
  combining the polynomials accordingly.

  \medskip Let $\phi(x_1,\ldots, x_k)\in \Delta$. Let $G\in \CCC$ and let
  $(v_1,\ldots, v_\ell)$ be a sequence of vertices of~$G$. Let
  $A\coloneqq\{v_1,\ldots, v_\ell\}$. As $\CCC$ is stable, according
  to \cref{thm:shelah1}, there is an integer $r$ such that
  $\abs{S_{\{\phi\}}(G,A)}\leq \abs{A}^r$. Here, we consider the sets
  $S_{\{\phi((x_i), (x_1,\ldots, x_{i-1},x_{i+1},\ldots, x_k))\}}(G,A)$ for
  every partition $\phi((x_i), (x_1,\ldots, x_{i-1}$, $x_{i+1},\ldots, x_k))$ of
  the free variables of $\phi$ and we choose $r$ such that it works for every
  such partition of the variables.

  \medskip We prove by induction on $0\leq m\leq k$ that there exists
  a sub-sequence $\bar{u}_m$ of $(v_1,\ldots, v_\ell)$
  with
    \begin{enumerate}
    \item $\abs{\bar{u}_m}\ge f^m(\ell)$ and
    \item \label{item:cond-b}for all sub-sequences
      $\bar a = a_1,\ldots,a_k$ and $\bar b = b_1,\ldots,b_k$ of
      $\bar{u}_m$, if $a_i=b_i$ for all $i\in\{1,\ldots,k-m\}$, then
      \[G\models \phi(\bar a) \Leftrightarrow G\models \phi(\bar b)\,.\]
    \end{enumerate}
    Note that the elements of a sub-sequence of a sequence preserve their respective order.

    \bigskip For $m=0$ there is nothing to show, we can take
    $\bar{u}_0 =(v_1,\ldots, v_\ell)$ of length $f^0(\ell)=\ell$.

    For $m+1$ assume that $\bar{u}_m=(v_1,\ldots, v_{\ell(m)})$ is
    constructed as required. We define
    \[\phi^m = \phi((x_{k-m}), (x_1,\ldots, x_{k-m-1}, v_{\ell(m)-m+1},\ldots,v_{\ell(m)})),\]
    that is, we fix the interpretation of the last $m$ free variables
    as the last $m$ elements of the sequence~$\bar{u}_m$.

    \medskip We construct a tree $T$ whose vertices are elements from
    $\bar{u}_m$, except for the root, which we label~$\emptyset$. 
    We attach $v_1$ as a child of the root $\emptyset$. 
    In the following, by a \emph{maximal} element $z$ in the tree $T$
    satisfying some condition we always mean an element of $T$
    satisfying this condition which is as far away from the root as
    possible in the sense that in the subtree of $T$ rooted at $z$ no other
    element satisfies this condition.

    By induction,
    assume that for some $j < \ell(m)$ all $v_i$ with $i<j$ are
    inserted in the tree and we want to insert $v_j$. For a vertex
    $v$ of the tree let $P(v)$ be the path from the root to $v$
    excluding~$v$ and for a child $v$ of the root
    let $P(v)$ be the empty path. Let
    $w$ be a maximal vertex that realises the
    same $\{\phi^m\}$-type over $P(w)$ as $v_j$ (that is, no successor
    $z$ of $w$ realises the same $\{\phi^m\}$-type over~$P(z)$ as $v_j$), or
    $w = \emptyset$ if no such vertex exists. We insert $v_j$ as a new child
    of $w$.

    We call the resulting tree~$T$ a \emph{type tree.} For
    two vertices $v$ and $w$ we write $v <_T w$ if $v \in P(w)$.  The
    tree constructed in this way satisfies three properties:
    \begin{itemize}
    \item if $v = v_i <_T w = v_j$, then $i<j$, that is, paths in the type tree respect 
    the order of $\bar{u}_m$,
    \item if $v <_T w$, then $v$ and $w$ have
      the same $\{\phi^m\}$-types over $P(v)$, and
    \item if $v\neq w$ and neither $v <_T w$ nor $w <_T v$, then $v$
      and $w$ realise different $\{\phi^m\}$-types over
      $P(v)\cap P(w)$ and they realise the same $\{\phi^m\}$-type
      over $P(z)$, where $z$ is the maximal element of $P(v)\cap P(w)$.
    \end{itemize}

    We show that there exists a sufficiently long branch of $T$ that can be used as
    $\bar{u}_{m+1}$. 
    Let $\bar a = a_1,\ldots,a_k$ and
    $\bar b = b_1,\ldots,b_k$ be sub-sequences of $\bar{u}_m$ such
    that, first, $a_i=b_i$ for all $i\in\{1,\ldots,k-m-1\}$, and,
    second, both $\bar{a}$ and $\bar{b}$ are subsequences of a path
    $P(w)$ in $T$ for some~$w$. If $a_{k-m} = b_{k-m}$,
    Condition~\ref{item:cond-b} is proven for $\bar{u}_{m+1}$, so we can
    assume without loss of
    generality that $a_{k-m}<_T b_{k-m}$. For all sequences $\bar d_1$
    and $\bar d_2$ we abbreviate
    $G \models \phi(\bar d_1)\Leftrightarrow G\models \phi(\bar d_2)$
    by $\phi(\bar d_1)\equiv_G \phi(\bar d_2)$.  Then
    \[\phi(a_1,\ldots,a_{k-m-1},a_{k-m}, \ldots,a_k) \equiv_G
      \phi(a_1,\ldots,a_{k-m-1},a_{k-m}, v_{\ell(m)-m+1},\ldots,v_{\ell(m)})\]
    by the induction hypothesis, as the arguments coincide on the
    first $k-m$ arguments. As 
    $a_{k-m}$ and $b_{k-m}$ realise the same $\phi^m$-type over
    $P(a_{k-m})$, we obtain
    \[ \phi(a_1,\ldots,a_{k-m-1},a_{k-m},v_{\ell(m)-m+1},\ldots,v_{\ell(m)})
      \equiv_G
       \phi(a_1,\ldots,a_{k-m-1},b_{k-m},v_{\ell(m)-m+1},\ldots,v_{\ell(m)})
    \]
    and, using the induction hypothesis again and the condition that
    $a_i=b_i$ for $i\le k-m-1$,
    \[
      \phi(a_1,\ldots,a_{k-m-1},b_{k-m},v_{\ell(m)-m+1},\ldots,v_{\ell(m)})
      \equiv_G
      \phi(a_1,\ldots,a_{k-m-1},b_{k-m},b_{k-m+1},\ldots,b_k)\,.
    \]
    This proves Condition~\ref{item:cond-b}. 
    
    It remains to show that the tree
    has a sufficiently long branch. For this, we want to show that $T$ does not
    contain a large complete binary subtree. 
	Consider the graph
    \[G_B=G_{\phi((x_{k-m}),(x_1,\ldots, x_{k-m-1},v_{\ell(m)-m+1},\ldots,
      v_{\ell(m)}))}\]
    consisting of vertices represented by single vertices of~$G$ (first type)
    and those represented by $k-1$-tuples (second type). 
	As by assumption $\CCC$ is stable,
    according to \cref{thm:ndstable}, the graph $G_B$ as
    an interpretation of $G$ has
    ladder index at most $s$ for some constant~$s$, and hence according
    to \cref{lem:branching}, its branching index is bounded by
    $t\coloneqq 2^{s+2}-2$. Let $S$ be a complete binary subtree of $T$. 
	We show that we can construct a branching witness for $G_B$ with 
	the same leaves as $S$ (note that the vertices of first type are actual 
	vertices of $G$). This implies that
    $S$ can have depth at most $t$. 
    
    By construction of $S$, all distinct leaves $v,w$ of $S$ have a 
    different $\{\phi^m\}$-type over $P(v)\cap P(w)$ while they
    have the same type over $P(z)$, where $z$ is the maximal element
    of $P(v)\cap P(w)$. In particular, there is a tuple 
    $(a_1,\ldots, a_{k-m-1})\in P(z)$ such that 
    $G\models\phi(a_1,\ldots, a_{k-m-1}, u, v_{\ell(m)-m+1},\ldots, 
    v_{\ell(m)})$ for all right successors $v$ of $z$, and such that 
    $G\not\models\phi(a_1,\ldots, a_{k-m-1}, w, v_{\ell(m)-m+1},\ldots, 
    v_{\ell(m)})$ for all left successors $w$ of $z$. 
    Observe that all inner vertices of the so constructed branching
    witness are distinct as no other tuple contains the element $a_{k-m-1}$
    at position $k-m$. Hence we constructed a branching witness of index $t$
    as claimed. 
    

    We now assign to every vertex $v$ in the tree $T$ its \emph{binary rank},
    that is, the maximal height of a full binary tree that is a
    sub-graph of the sub-tree rooted at $v$ (compare to the
    \emph{stability rank} of sets of formulas in~\cite{shelah1990classification}). The
    \emph{depth} of a vertex $v$ in $T$ is $\abs{P(v)}$.
  
    Let $N^s_\ell$ be the set of vertices of the type tree $T$ with
    rank $s$ and depth $\ell$. For $\ell>0$ let
    $X^s_\ell\subseteq N^s_\ell$ be the set vertices from $N^s_\ell$
    whose direct predecessor has rank $s$ and let
    $Y^s_\ell\subseteq N^s_\ell$ be the set vertices from~$N^s_\ell$
    whose direct predecessors have rank $s+1$ (we may assume that we have maximal branching, 
    hence all predecessors have rank $s$ or $s+1$).
  
    Then $N^s_\ell \subseteq X^s_\ell \cup Y^s_\ell$.
    Define
    \[n^s_\ell = \abs{N^s_\ell}, \qquad x^s_\ell = \abs{X^s_\ell}
    \text{\qquad and } \qquad y^s_\ell = \abs{Y^s_\ell}\,.\]
  
    A vertex $v$ of depth $d$ can have at most $(d+m+1)^r$ direct
    successors (recall that for every set $A$ we have
    $\abs{S_{\{\phi\}}(G,A)}\leq \abs{A}^r$). This is because every two such
    successors have different $\phi^m$-types over $P(v)\cup\{v\}$, so
    there are $\abs{P(v)}+1$ predecessors and, additionally, $m$
    parameters that are fixed in the formula~$\phi^m$.

    The following inequalities hold:
    \begin{enumerate}
    \setlength\itemsep{3pt}
    \item $x^s_{\ell+1} \le n^s_\ell$, because the direct predecessor $w$ of any vertex in
      $X_\ell^s$ has at most one direct successor
      of rank $s$ (otherwise $w$ had rank $s+1$),\label{item:a}
    \item
      $y^s_{\ell+1} \le n^{s+1}_\ell\cdot (\ell+m+1)^r$,\label{item:b}
    \item
      $n^s_{\ell+1} \le n^s_\ell + n^{s+1}_\ell\cdot
      (\ell+m+1)^r$,\label{item:c}
    \item $n^s_0 = 0$ for $1\le s<t$,\label{item:d}
    \item $n^t_\ell \le 1$ where $t$ is the rank of the
      root. \label{item:e}
    \end{enumerate}

\smallskip 
    \begin{claim}
    $n^{t-s}_{\ell+1} \le (\ell+m+1)^{s(r+1)}$.
    \end{claim}
    \smallskip
    \begin{claimproof}
      For $s=1$ we show the statement by induction on $\ell$. For
      $\ell=0$ first (\ref{item:c}) and then (\ref{item:d}) and
      (\ref{item:e}) give us
      \[n^{t-1}_1 \le n^{t-1}_0 + n^t_0(\ell+m+1)^r \le
      (\ell+m+1)^r\,.\]
      For $\ell+1$ using first (\ref{item:c}) and then the induction
      hypothesis (for $\ell$) and (\ref{item:e}) we obtain
      \[
      n^{t-1}_{\ell+1} \le n^{t-1}_\ell + n^t_\ell (\ell+m+1)^r \le
      (\ell + m)^{r+1} + (\ell+m + 1)^r \le (\ell + m +1)^{r+1}\,.
      \]

      For $s+1$ the proof is again by induction on $\ell$.  For
      $\ell=0$,
      \[\ n^{t-(s+1)}_1 \le n^{t-(s+1)}_0 + n^{t-s}_0(\ell + m + 1)^r
      \le (\ell+m+1)^r\]
      and for $s+1$ as above and using the induction hypothesis for
      $s$,
      \begin{align*}
      n^{t-(s+1)}_{\ell+1} \le n^{t-(s+1)}_\ell + n^{t-s}_\ell(\ell
      + m + 1)^r& \le (\ell+m)^{(s+1)(r+1)} + (\ell+m)^{s(r+1)}(\ell+m+1)^r \\&\le
      (\ell+m+1)^{(s+1)(r+1)}\,.
      \end{align*}
    \end{claimproof}

    The total number $n_{\ell+1}$ of vertices of depth $\ell+1$ is
    then
    \[ n_{\ell+1} \le \sum_{s\le t} n^{t-s}_{\ell+1} \le \sum_{s\le t}
    (\ell+m+1)^{s(r+1)} \le t(\ell+m+1)^{t(r+1)}\,.\]
    Now the number of vertices in a tree of depth at most $d$
    (including the root at depth~$0$) is
    \[ N = 1 + \sum_{\ell < d} n_{\ell+1} \le 1 +
    \sum_{\ell<d}t(\ell+m+1)^{t(r+1)} < t(d+m+1)^{t(r+1)+1},\]
    and thus $d> \left(\frac N{t}\right)^{\frac{1}{tr+t+1}}-m-1$, so if
    $\abs{\bar{u}_m}>t(d+m+1)^{t(r+1)+1}$, then there is a branch of
    length at least $\left(\frac N{t}\right)^{tr+t+1}-m$. Replacing
    $m$ with $k$ we obtain a slightly worse bound that, however, does
    not depend on the induction step.

    In $k$ steps we extract a sequence $\bar{u}$ of length at least
    $f^{(k)}(\ell(m))$, where
    $f(\ell) = \left(\frac \ell{t}\right)^{\frac{1}{tr+t+1}}-k$.

    \bigskip It remains to analyse the running time of the
    constructive procedure described above.  It suffices to
    show that the sequence constructed in the inductive step of the above
    proof can be
    computed in polynomial time. First, the type tree $T$ is
    computed. We construct $T$ inductively as in the proof. While
    adding new vertices to the tree, we keep track of their height in
    the tree and we keep track of the longest branch. Thus after
    computing~$T$, we just output the longest branch as $\bar{u}_{m+1}$.

    For every vertex $v\in \bar{u}_i$, we
    search through the type tree to find the maximal element with the
    same $\{\phi^m\}$-type to decide where to insert $v$. Here, we have
    to compare $v$ to less than $\ell$ elements. 

    To check the $\{\phi^m\}$-type we perform a standard model-checking
    algorithm to verify whether $G\models \phi(\bar a, v, \bar
    v)$.
    Here $\bar a$ is an $m$-tuple of vertices on the path from the root to the
    current leaf (of length less than $\ell$) 
    and~$\bar v$ are the parameters from $\phi^m$. 
	Note that $m\le k$ and the parameters in the formulas of
    $\phi^m$ are fixed for each $m$. Hence, the check whether 
    $G\models \phi(\bar a, v, \bar v)$ can be carried out in time
    $\Oof(n^q\cdot \lambda(\Delta)\cdot  a(n))$ and it takes time at most
    $\Oof(\ell^k)$ to iterate through all
    $m$-tuples~$\bar a$.

    Summing up, we need $\Oof(\abs{\Delta}\cdot k \cdot\ell\cdot \ell^k
    \cdot n^{q}\cdot \lambda(\Delta)\cdot a(n))=\Oof(\abs{\Delta}\cdot k \cdot \ell^{k+1}
    \cdot n^{q}\cdot \lambda(\Delta)\cdot a(n))$ steps to
    compute the final $\Delta$-indiscernible sequence. 
\end{proof}

As a consequence we obtain the following corollary, which 
in particular applies to nowhere dense classes of graphs.

\begin{corollary}\label{crl:extract_indiscernible_nd}
  Let $\CCC$ be a stable class of graphs and let~$\Delta$ be an arbitrary finite set of
  formulas. Then there is a polynomial $p(x)$ such that for every
  positive integer~$m$ and every sequence $(v_1,\ldots, v_\ell)$ of
  vertices of $G\in \CCC$ of length $\ell=p(m)$, there is a
  subsequence of length at least $m$ which is $\Delta$-indiscernible.
\end{corollary}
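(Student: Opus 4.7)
The statement of \autoref{crl:extract_indiscernible_nd} is essentially the non-algorithmic content of \autoref{thm:extract_indiscernibles}, with the added emphasis that the conclusion applies to nowhere dense classes. So my plan is to deduce it as a two-line corollary from what has already been established.

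First, I would observe that \autoref{thm:extract_indiscernibles} already furnishes, for any stable class $\CCC$ and any finite formula set $\Delta$, a polynomial $p(x)$ with exactly the claimed extraction property: every sequence of length $\ell = p(m)$ contains a $\Delta$-indiscernible subsequence of length $m$. Since the corollary drops the algorithmic quantitative statement and asks only for the existence of such a subsequence, no additional argument is needed in the stable case beyond citing the theorem.

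Second, to justify the remark that the corollary in particular applies to nowhere dense classes, I would invoke \autoref{thm:ndstable}, which states that every nowhere dense class of graphs is stable. Composing this with the previous step yields the desired statement for any nowhere dense $\CCC$: given such a class, it is stable, so \autoref{thm:extract_indiscernibles} supplies the polynomial $p(x)$ and the subsequence.

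There is no real obstacle here, since the work has already been done in \autoref{thm:extract_indiscernibles}; the corollary's purpose is purely to package the existential conclusion in a form convenient for the applications to nowhere dense classes in the later sections. The only minor care I would take is to note that the polynomial $p(x)$ depends on both $\CCC$ (through the ladder index bound $s$ that appears in the proof of \autoref{thm:extract_indiscernibles}) and on $\Delta$, but not on the particular graph $G \in \CCC$ or on the chosen starting sequence, which is exactly what the corollary asserts.
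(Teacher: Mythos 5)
Your proposal is correct and matches the paper's own (implicit) justification: the corollary is obtained directly by citing \autoref{thm:extract_indiscernibles} for stable classes, with the application to nowhere dense classes following from \autoref{thm:ndstable}. Nothing further is needed.
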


\section{Improved Bounds on Uniform Quasi-Wideness}
\label{sec:uqw}

We can now show our first main theorem (Theorem~\ref{thm:uqw}), stating that a class is
uniformly quasi-wide with margin $N\colon\N\times \N\rightarrow \N$ if, and
only if, for every $r\in \N$ there is a polynomial margin
$N'(r,m)$. For convenience, we repeat the statement of the theorem
here. 

\newcounter{tmp_c}
\setcounter{tmp_c}{\value{theorem}}
\setcounter{theorem}{\value{main_thm}}
\newcounter{tmp_c_section}
\setcounter{tmp_c_section}{\value{section}}
\setcounter{section}{\value{main_thm_section}}

\begin{theorem}
Let $\CCC$ be a nowhere dense class of graphs. For every $r\in \N$
  there exists a polynomial~$p_r(x)$ and a constant $s(r)$ such that
  for all $m\in \N$ the following holds. For all $G\in\CCC$ and all 
  sets $A\subseteq V(G)$ of size at least
  $p_r(m)$, there is a set $S\subseteq V(G)$ of size at
  most $s(r)$ such that there is a set $B\subseteq A$ of size at
  least~$m$ which is $r$-independent in $G-S$.

  Furthermore, if $K_c\not\minor_rG$ for all $G\in \CCC$, then
  $s(r)\leq c$ and there is an algorithm, that given an $n$-vertex graph
  $G\in \CCC$, $\epsilon>0$, $r\in \N$ and $A\subseteq V(G)$ of size at least
  $p_r(m)$, computes a set $S$ of size at most $s(r)$ and an
  $r$-independent set $B\subseteq A$ in $G-S$ of size at least $m$ in
  time $\Oof(r\cdot c\cdot |A|^{c+1}\cdot n^{1+\epsilon})$.
\end{theorem}
\setcounter{theorem}{\value{tmp_c}}
\setcounter{section}{\value{tmp_c_section}}

The proof of the theorem is based on the extraction of a large
$\Delta$-indiscernible sequence from the set $A$ for a certain set
$\Delta$ of formulas.  Let $k\in \N$ and for $1\leq i\leq k$ let
\smallskip
\[\phi_i(x_1,\ldots, x_k)\coloneqq\exists y\big(\bigwedge_{1\leq j\leq
  i}E(y,x_j) \wedge \bigwedge_{i<j\leq k}\neg E(y,x_j)\big)\] and let
\[\psi_i(x_1,\ldots, x_k)\coloneqq\exists y\big(\bigwedge_{i< j\leq
  k}E(y,x_i) \wedge \bigwedge_{1\leq j\leq i}\neg E(y,x_j)\big).\]

The formula $\phi_i(x_1,\ldots,x_k)$ states that there exists a 
vertex $v$ which is adjacent exactly to the first~$i$ elements 
$x_1,\ldots, x_i$, while $\psi_i(x_1,\ldots,x_k)$ states that 
there exists a vertex which is non-adjacent exactly to the
first $i$ elements $x_1,\ldots, x_i$. 
Let
\[\Delta_k\coloneqq\{E(x_1,x_2)\}\cup \{\phi_1,\ldots, \phi_k, \psi_1,\ldots,\psi_k\}.\]

\bigskip Note that the formula $E(x_1,x_2)$ in $\Delta_k$ guarantees
that the vertices of every $\Delta_k$-indiscernible sequence of
vertices of a graph $G$ either form an independent set or a clique in
$G$.

The crucial property we are going to use is stated as Claim 3.2 in
\cite{malliaris2014regularity}. Recall the definition of the
ladder-index from \cref{sec:prelims}.

\begin{lemma}[Claim 3.2 of \cite{malliaris2014regularity}]\label{lem:smalldegree}
  Let $k\in \N$ and let $G$ be a graph with ladder-index less
  than~$k$. If $n\geq 4k$ and $(v_1,\ldots, v_n)$ is a
  $\Delta_k$-indiscernible sequence in $G$ and $w\in V(G)$, then
  either
  \[\abs{N(w)\cap \{v_1,\ldots, v_n\}}<2k\text{ or }\abs{\{v_1,\ldots,
  v_n\}\setminus N(w)}<2k.\]
\end{lemma}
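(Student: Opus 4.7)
The plan is to prove the contrapositive: assume that $w$ has both at least $2k$ neighbours and at least $2k$ non-neighbours in the sequence, and deduce that $G$ has ladder index at least $k$, contradicting the hypothesis. Set $A = \{j \mid v_j \in N(w)\}$ and $B = \{1,\dots,n\}\setminus A$, so $|A|,|B|\geq 2k$.

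The heart of the argument is a combinatorial dichotomy: \emph{either} $\phi_i$ is realised by $w$ on some increasing $k$-tuple from the sequence for every $i\in\{0,\dots,k\}$ (meaning one can choose $j_1<\dots<j_k$ with $j_1,\dots,j_i\in A$ and $j_{i+1},\dots,j_k\in B$), \emph{or} $\psi_i$ is symmetrically realised for every $i\in\{0,\dots,k\}$. The boundary cases $i\in\{0,k\}$ are immediate from $|A|,|B|\geq k$. For $0<i<k$, non-realisability of $\phi_i$ is equivalent to the $i$-th element of $A$ lying after the $(|B|-k+i+1)$-th element of $B$, which forces the first $|B|-k+2i$ positions of the sequence to contain at least $|B|-k+i+1$ indices from $B$. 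Dually, non-realisability of $\psi_j$ for some $0<j<k$ forces the first $|A|-k+2j$ positions to contain at least $|A|-k+j+1$ indices from $A$. If both failed simultaneously, one could take whichever of the two prefixes is longer and add the two lower bounds, finding at least $|A|+|B|-2k+i+j+2$ elements crammed into that prefix; this exceeds the prefix length once $|A|,|B|\geq 2k$ and $i,j\leq k-1$, a contradiction that establishes the dichotomy.

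Once we are in the case that every $\phi_i$ is realised, $\Delta_k$-indiscernibility promotes the truth of each $\phi_i$ from a single tuple to every increasing $k$-tuple, in particular to $(v_1,\dots,v_k)$. For each $i\in\{1,\dots,k\}$ choose a witness $y_i\in V(G)$, so that $y_i$ is adjacent to $v_1,\dots,v_i$ and non-adjacent to $v_{i+1},\dots,v_k$. Then $(v_1,\dots,v_k,y_1,\dots,y_k)$ satisfies $(v_a,y_b)\in E(G)\Leftrightarrow a\leq b$, which is exactly a ladder of size $k$, contradicting the assumption that the ladder index is less than $k$. The dual case, in which every $\psi_i$ is realised, yields the same conclusion after reversing the order of the base tuple.

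The main obstacle is the combinatorial dichotomy: this is where the hypotheses $|A|,|B|\geq 2k$ and $n\geq 4k$ genuinely enter, and careful bookkeeping is required to avoid off-by-one errors in the prefix count. The remaining steps — lifting the realised formulas via $\Delta_k$-indiscernibility, extracting witnesses for $\phi_i$ (or $\psi_i$) on $(v_1,\dots,v_k)$, and reading off the ladder — are a routine unwinding of the definitions once the dichotomy is in place.
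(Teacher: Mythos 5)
This lemma is not proved in the paper at all: it is imported verbatim as Claim 3.2 of Malliaris and Shelah~\cite{malliaris2014regularity}, so there is no in-paper argument to compare against. Your contrapositive proof is correct and self-contained, and it follows the natural route behind the cited claim. The bookkeeping in your dichotomy checks out: non-realisability of $\phi_i$ for $0<i<k$ means the $i$-th element of $A$ lies beyond the $(|B|-k+i+1)$-th element of $B$, whose position is therefore at most $(i-1)+(|B|-k+i+1)=|B|-k+2i$; combining this with the dual bound for a failing $\psi_j$ packs at least $|A|+|B|-2k+i+j+2$ distinct indices into a prefix of length $\max(|B|-k+2i,\,|A|-k+2j)$, which is impossible once $|A|,|B|\geq 2k$ and $i,j\leq k-1$. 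The transfer to the tuple $(v_1,\dots,v_k)$ via $\Delta_k$-indiscernibility and the assembly of the $k$-ladder from the witnesses (with the base tuple reversed in the $\psi$-case) are correct and contradict ladder index less than $k$. Two cosmetic remarks only: the boundary formulas you call $\phi_0$ and $\psi_0$ are not literally members of $\Delta_k$, but they are equivalent to $\psi_k$ and $\phi_k$, which are, so the indiscernibility transfer still applies; and since this paper defines $N(w)$ as the \emph{closed} neighbourhood, if $w$ itself occurs in the sequence then one index of your $A$ is not a genuine neighbour of $w$ --- harmless, since $|A|\geq 2k-1$ still leaves all of your inequalities with slack.
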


For nowhere dense classes we can derive even stronger properties of
$\Delta_k$-indiscernible sequences. We need one more lemma, which follows easily
from Lemma 4.11(2) of \cite{gajarsky2017kernelization}.

\begin{lemma}[see Lemma 4.11(2) of \cite{gajarsky2017kernelization}]\label{lem:diversity}
  Let $\CCC$ be a nowhere dense class of graphs. For every
  $\epsilon>0$ there is an integer $n_0$ such that if
  $A\subseteq V(G)$ for $G\in \CCC$ with $\abs{A}\geq n_0$,
  then \[\abs{\{N(v)\cap A : v\in V(G)\}}\leq \abs{A}^{1+\epsilon}.\]
\end{lemma}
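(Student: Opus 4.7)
The plan is to bound $|\{N(v) \cap A : v \in V(G)\}|$ by a double-count on a bipartite subgraph of $G$, combining the almost-linear edge density of nowhere dense classes with a size dichotomy, bootstrapping from the polynomial bound already given by Corollary~\ref{crl:neighbourhoodcomplexity}.

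First I would pass to representatives. For each non-empty trace $N \in \{N(v)\cap A : v\in V(G)\}$ fix one $v_N\in V(G)$ with $N(v_N)\cap A=N$ and set $R=\{v_N\}$; the quantity to bound is $|R|+1$ (the empty trace contributes $1$). The central object is the bipartite subgraph $H$ of $G$ on $A\cup R$ whose edges are $\{a,v_N\}$ for $a\in N(v_N)\cap A$. By nowhere-denseness, for every $\delta>0$ there exists $n_0=n_0(\delta)$ such that every subgraph of any $G\in\CCC$ with at least $n_0$ vertices has at most $|V|^{1+\delta}$ edges.

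Next I would split $R$ at a threshold $d$ on $|N(v_N)\cap A|$. The representatives with a small neighbourhood are bounded directly by subset counting: $|R_{\le d}|\le\sum_{s\le d}\binom{|A|}{s}\le(d+1)|A|^d$. The representatives with a large neighbourhood contribute many edges to $H$: since each $v_N\in R_{>d}$ supplies at least $d$ edges, one obtains $d\cdot|R_{>d}|\le|E(H)|\le(|A|+|R|)^{1+\delta}$, so $|R_{>d}|\le(|A|+|R|)^{1+\delta}/d$. Summing gives the circular inequality
\[|R|\;\le\;(d+1)|A|^d\;+\;(|A|+|R|)^{1+\delta}/d.\]

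The chief obstacle is the circularity of this inequality: when $|R|$ dominates $|A|$ the right-hand side behaves like $|R|^{1+\delta}/d$, which imposes essentially no constraint on $|R|$, and the naive iteration $s_k\mapsto s_k(1+\delta)$ is expanding rather than contracting. To genuinely close the argument I would replace the subgraph-density bound by a \emph{shallow-minor} density bound applied to a $1$-shallow minor of $H$ obtained by contracting each representative $v_N\in R_{>d}$ with one of its neighbours in $A$; if the contractions can be carried out so that no vertex of $A$ absorbs too many representatives, the resulting minor has at most $O(|A|)$ vertices and at least $d\cdot|R_{>d}|/2$ edges, replacing $(|A|+|R|)^{1+\delta}$ by $|A|^{1+\delta}$ on the right-hand side and breaking the self-reference. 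Arranging the contractions (while avoiding overcounting of representatives sharing a neighbour) is the main technical hurdle; once it is in place, a balanced choice such as $d=\lceil 2/\epsilon\rceil$ and $\delta=\epsilon/2$ together with Corollary~\ref{crl:neighbourhoodcomplexity} to handle any lower-order slack yields $|R|\le|A|^{1+\epsilon}$ for every $|A|$ above a threshold depending only on $\CCC$ and $\epsilon$.
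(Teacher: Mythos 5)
The overall shape is right---pass to representatives, form the bipartite trace graph, split by a degree threshold, and appeal to the almost-linear edge density of shallow minors in a nowhere dense class---but the argument has two genuine problems, and the one you flag as a ``technical hurdle'' is in fact the entire content of the lemma, not a detail to be arranged.

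First, the quantitative split is off. With the subset-counting bound $|R_{\le d}|\le (d+1)|A|^d$ you must take $d=1$: any $d\ge 2$ already yields a term of order $|A|^d\gg|A|^{1+\epsilon}$, and the suggested $d=\lceil 2/\epsilon\rceil$ gives $|A|^{2/\epsilon}$, which is astronomically larger than the target. So the whole burden falls on bounding $|R_{\ge 2}|$, and you cannot trade off a large $d$ against the minor-density side.

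Second, and more seriously, the proposed fix is circular. To contract each $v\in R_{>d}$ into a private neighbour $a\in A$ you need a matching from $R_{>d}$ into $A$ that saturates (most of) $R_{>d}$; any such matching in particular forces $|R_{>d}|\lesssim|A|$, which up to the $\epsilon$-slack is precisely the conclusion you are trying to reach. Conversely, if several representatives are allowed to collapse into the same $a\in A$, the branch sets remain radius-$1$ stars so the depth bound is fine, but parallel edges into a common blob coalesce: the minor on $\le|A|$ vertices trivially has at most $\binom{|A|}{2}$ edges regardless of $|R_{>d}|$, so the intended lower bound $|E|\ge d\,|R_{>d}|/2$ simply fails and the density inequality says nothing about $|R_{>d}|$. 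Choosing the $2$-element ``anchors'' so that almost all pairs are distinct (for example greedily, exploiting the distinctness of the traces, or exploiting $K_{c,c}$-freeness) is exactly where the real work lies; declaring it a hurdle and proceeding leaves the proof without its core step. The appeal to \sref{Corollary}{crl:neighbourhoodcomplexity} does not rescue this: substituting $|R|\le|A|^s$ into $(|A|+|R|)^{1+\delta}$ only yields a bound of order $|A|^{s(1+\delta)}$, which is worse, not better, and iterating is expansive as you yourself observe.

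In short: the high-level strategy (bipartite trace graph plus depth-$1$ shallow-minor density) is the right one, but as written the proposal neither fixes a workable threshold $d$ nor supplies the crucial pair-selection/contraction scheme, so the key inequality $|R_{\ge 2}|\le|A|^{1+\epsilon}$ is not actually established.
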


Note that if $K_k\not\minor_1G$, then $G$ does not contain a
$k$-ladder and excludes the complete bipartite graph
$K_{k,k}$ as a subgraph.

\begin{lemma}\label{lem:realisedtypes}
  Let $G$ be a graph with $K_k\not\minor_1 G$. There exists an integer
  $n_1=n_1(k)$ such that if $(v_1,\ldots v_n)$ is a
  $\Delta_k$-indiscernible sequence of length $n\geq n_1$, then every
  vertex $v\in V(G)$ is either connected to at most one vertex of
  $\{v_1,\ldots, v_n\}$ or to all of them.
\end{lemma}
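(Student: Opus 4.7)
The plan is to argue by contradiction: assume some vertex $w \in V(G)$ has $\ell := |N(w) \cap \{v_1, \ldots, v_n\}|$ satisfying $2 \leq \ell \leq n - 1$, and extract a copy of $K_{k,k}$ in $G$, contradicting the observation preceding the lemma that $K_k \not\minor_1 G$ excludes $K_{k,k}$ as a subgraph.

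The first step is to use the atomic formula $E(x_1,x_2) \in \Delta_k$: by $\Delta_k$-indiscernibility, the set $\{v_1,\ldots,v_n\}$ induces either a clique or an independent set. The clique option would give a copy of $K_k \subseteq G$ whenever $n_1 \geq k$, so the sequence must be independent. Taking $n_1 \geq 4k$ then makes \sref{Lemma}{lem:smalldegree} applicable (since $K_k \not\minor_1 G$ bounds the ladder-index of $G$ by $k-1$) and partitions the problem into $\ell < 2k$ and $\ell > n - 2k$; passing to the complementary pattern reduces the second to the first, so I would treat only $2 \leq \ell \leq 2k-1$.

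For this case I would carry out a positional case analysis of the neighbours $v_{i_1} < \ldots < v_{i_\ell}$ of $w$: either some $v_{i_a}$ is succeeded by at least $k-2$ non-neighbours of $w$ (so $w$ witnesses $\phi_2$ on a $k$-tuple with pattern $1^2 0^{k-2}$), or all neighbours of $w$ lie in a short right-suffix of the sequence and, since $n_1$ is large, $w$ witnesses $\psi_{k-2}$ via $k-2$ non-neighbours preceding the first neighbour. By $\Delta_k$-indiscernibility the corresponding formula---say $\phi_j$ or $\psi_j$ with $1 \leq j \leq k-1$---must hold on \emph{every} increasing $k$-tuple drawn from $\{v_1,\ldots,v_n\}$, producing a witness vertex with the prescribed adjacency pattern for each such tuple.

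The final step turns this uniform realisation into a copy of $K_{k,k}$. Combining (i) the neighbourhood-trace bound $|\{N(v) \cap \{v_1,\ldots,v_n\} : v \in V(G)\}| \leq n^{1+\epsilon}$ from \sref{Lemma}{lem:diversity}, (ii) the few-or-many dichotomy of \sref{Lemma}{lem:smalldegree} reapplied to the witnesses, and (iii) a common-intersection count, I would argue that $k$ of the witnesses must be ``high-trace'' vertices each adjacent to all but at most $2k-1$ of the sequence; their common neighbourhood in $\{v_1,\ldots,v_n\}$ then has size at least $n - k(2k-1) \geq k$ for $n_1$ taken polynomially large in $k$, yielding a $K_{k,k}$ among these $k$ witnesses and any $k$ vertices in their common neighbourhood. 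I expect this last pigeonhole-and-intersection step to be the main technical obstacle: getting enough distinct witnesses to survive the trace-pigeonhole while simultaneously invoking \sref{Lemma}{lem:smalldegree} on each is exactly what drives the value of $n_1(k)$.
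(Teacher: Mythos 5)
Your plan tracks the paper's proof closely: conclude independence from the atomic formula in $\Delta_k$, invoke the few-or-many dichotomy of \sref{Lemma}{lem:smalldegree}, locate a bad $\phi_i$/$\psi_i$ pattern at the offending vertex $w$, propagate it to all increasing $k$-tuples by indiscernibility, and then count witnesses using \sref{Lemma}{lem:diversity}. The only genuine divergence is the direction of the final count. The paper first bounds the number of high-degree vertices (those adjacent to all but $<2k$ of the $v_i$) by $k$ via a direct $K_{k,k}$ argument, shows these can account for only $O(n^2)$ of the $\binom nk$ tuples, and concludes that $\Omega(n^2)$ distinct low-degree witness traces are forced, contradicting \sref{Lemma}{lem:diversity} (which with $\epsilon=1/2$ gives at most $n^{3/2}$ traces). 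You run the same count in the opposite direction: diversity bounds the low-trace witnesses by $n^{1+\epsilon}$ so they cover only $O(n^{1+\epsilon}\cdot n^{k-2})$ tuples, the remaining $\Omega(n^k)$ tuples then force $\Omega(n^{k-2})\gg k$ high-trace witnesses, and any $k$ of these give $K_{k,k}$ together with $k$ of the $\ge n-k(2k-1)$ sequence elements in their common neighbourhood. Both directions are valid once the per-witness coverage bounds are inserted (the paper writes these out explicitly as $\binom{n-a}{2}\binom{a}{k-2}$ and $\binom{a}{2}\binom{n-a}{k-2}$ for $a<2k$) together with the ``without loss of generality distinct traces'' reduction before invoking diversity; these arithmetic details are exactly what you flag as the remaining obstacle, and they are what fixes $n_1(k)$. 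One small caution on the positional case analysis: for $\phi_2$ you need \emph{two} neighbours of $w$ both preceding the $k-2$ chosen non-neighbours, so the dichotomy should be phrased in terms of whether some neighbour $v_{i_a}$ with $a\ge 2$ has at least $k-2$ non-neighbours to its right, with the complementary case handled by $\psi_{k-2}$; the paper's phrasing has the same implicit requirement.
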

\begin{proof}
  As $K_k\not\minor_1G$, in particular, $G$ has ladder-index less than
  $k$ and excludes $K_{k,k}$ as a subgraph. If $n\geq 4k$,
  according to \cref{lem:smalldegree}, every vertex $w\in V(G)$
  satisfies $\abs{N(w)\cap \{v_1,\ldots, v_n\}}<2k$ or
  $\abs{\{v_1,\ldots, v_n\}\setminus N(w)}<2k$. We first show that there
  are only a few vertices $w$ which satisfy
  $\abs{\{v_1,\ldots, v_n\}\setminus N(w)}<2k$. We will refer to these
  vertices as high degree vertices in the rest of the proof.

  Assume there are at least $k$ high degree vertices. Fix a set $A$ of
  exactly $k$ of these vertices. Take
  as $B$ the set $(\bigcap_{w\in A}N(w))\cap \{v_1,\ldots, v_n\}$.
  This set has order at least $n-2k^2$. By definition of
  $\Delta_k$-indiscernibility, the vertices $v_1,\ldots, v_n$ either
  form an independent set or a clique in $G$. If $n\geq k$, they form
  an independent set by assumption. Hence
  $A\cap B= \emptyset$ and if $n\geq 2k^2+k$, we
  find a complete bipartite graph $K_{k,k}$ as a subgraph of $G$, contradicting our
  assumption.

  \smallskip Assume towards a contradiction that there is a vertex $v\in V(G)$ which is connected
  to exactly $s$ vertices among $\{v_1,\ldots, v_n\}$,
  $2\leq s\leq n-1$.  Then for some $i$, $2\leq i\leq k-1$, there is
  an increasing tuple $(v_{i_1},\ldots, v_{i_k})$ for
  $1\leq i_1<\ldots <i_k\leq n$ such that
  $G\models\phi_i( v_{i_1},\ldots, v_{i_k})$ or
  $G\models\psi_i(v_{i_1},\ldots, v_{i_k})$: if $s\leq 2k$, we can
  choose $i=2$, pick $2$ neighbours of $v$ among $v_1,\ldots,v_n$ and $k-2$ non-neighbours
  which are either all smaller or all larger than the two neighbours
  and define $(v_{i_1},\ldots, v_{i_k})$ accordingly. Similarly, if
  $s>n-2k$ we can choose $i=k-1$ and pick one non-neighbour of $v$ and
  $k-1$ neighbours to define $(v_{i_1},\ldots, v_{i_k})$.
  
  Consider first the case that there is a tuple $(v_{i_1},\ldots, v_{i_k})$ such
  that $G\models\phi_2(v_{i_1},\ldots, v_{i_k})$ (the case that
  $G\models \psi_2(v_{i_1},\ldots, v_{i_k})$ is completely analogous).  By
  definition of $\Delta_k$-indiscernibility, every increasing
  $k$-tuple $(v_{j_1},\ldots, v_{j_k})$ satisfies $\phi_2$, that is,
  there is an element $w\in V(G)$ such that $w$ is adjacent to
  $v_{j_1}$ and $v_{j_2}$ and not to $v_{j_3},\ldots, v_{j_k}$.  There
  are ${n \choose k}$ increasing $k$-tuples of elements of
  $(v_1,\ldots, v_n)$.
  
  Every vertex $z$ with $a \coloneqq \abs{\{v_1,\ldots, v_n\}\setminus N(z)}<2k$
  can take the role of this vertex $w$ for at most
  ${n-a \choose 2}\cdot {a\choose k-2}$ tuples: choose $2$ vertices
  from the $n-a$ neighbours of $w$ and $k-2$ vertices from its $a$
  non-neighbours among $\{v_1,\ldots, v_n\}\setminus N(z)$.  We have
  at most $k$ of these high degree vertices as shown above. Hence at
  most $k\cdot {n-1 \choose 2}\cdot {2k-1\choose k-2}\in \Oof(n^2)$
  tuples satisfy $\phi_2$ because of high degree vertices.
  
  Every vertex $z$ with $a\coloneqq\abs{\{v_1,\ldots, v_n\}\cap N(z)}<2k$ can
  play the role of $w$ for at most
  ${a \choose 2}\cdot {n-a \choose k-2} \leq {2k-1\choose 2}{n-2
    \choose k-2}\in \Oof(n^{k-2})$
  tuples.  Denote by $x$ the number of small degree vertices. 
  
  \vspace{1mm}
  \noindent Then it
  must hold that
  \[k\cdot {n-1 \choose 2}\cdot {2k-1\choose k-2} + x\cdot
  {2k-1\choose 2}{n-2 \choose k-2}\geq {n\choose k},\]
  from which we conclude that $x\geq c\cdot n^2$ for some fixed
  constant $0<c<1$ and all $n>n_1$ for sufficiently large $n_1$ (choose
  $\epsilon=1/2$ and $n_1$ large enough such that we can 
  apply \cref{lem:diversity}).
  Without loss of generality we may assume that all small degree
  vertices induce distinct neighbourhoods on~$A$ (realising one
  neighbourhood twice does not help to realise more types). Now, for
  sufficiently large $n$ we have a contradiction
  to \cref{lem:diversity}.
  
  The proof for the case that there is a tuple
  $(v_{i_1},\ldots, v_{i_k})$ such that
  $G\models\phi_{k-1}(v_{i_1},\ldots, v_{i_k})$ or $G\models
  \psi_{k-1}(v_{i_1},\ldots, v_{i_k})$ is similar. Here, the
  high degree vertices can cover at most
  $k\cdot {n-1 \choose k-1}\cdot (2k-1)\in \Oof(n^{k-1})$ many tuples,
  while every low degree vertex can cover at most
  ${2k-1\choose k-1}\cdot (n-k-1)\in \Oof(n)$ many tuples.
\end{proof}

We are ready to prove the main theorem.

\begin{proof}[Proof of \cref{thm:uqw}]
  Let $\CCC$ be a nowhere dense class of graphs such that \mbox{$K_{f(i)}\not\minor_i
  G$} for all~$i$ and all $G\in \CCC$.  Let $G\in \CCC$ and let
  $A\subseteq V(G)$. According to \cref{crl:extract_indiscernible_nd}, we can
  extract from $A$ a large $\Delta_{f(1)}$-indiscernible sequence $B_1$.  This
  requires $A$ to be only polynomially larger than~$B_1$ according to the
  corollary.

  We construct a sequence of graphs $G_1 = G, G_2, \ldots$ and sequences
  $B_1\supseteq B_2\supseteq \ldots$, $\tilde{B}_1\supseteq \tilde{B}_2\supseteq \ldots$  and $S_1,S_2,\ldots$ of
  vertices of $G_i$ where $B_i \subseteq V(G)$, $\tilde{B}_i\subseteq V(G_i)$ and
  $S_i\subseteq V(G)\cap V(G_i)$ such that for all $i$
  \begin{enumerate}
  \item $B_i$ is $2i$-independent in $G-Z_i$ where
    $Z_i \coloneqq \bigcup_{j= 1}^{i} S_j$,\label{prop1}
  \item $\abs{S_i} < f(i)$, 
  \item $G_{i+1} \minor_i G$ and
  \item $\tilde{B}_i = \{G[N_i^{G-Z_i}(v)] \mid v\in B_i\}$ (recall
    that the vertices of $G_i$ as a minor of $G$ are subgraphs of $G$).
  \end{enumerate}

  As $K_{f(1)}\not\minor_1G_1$, we can apply \cref{lem:realisedtypes} and
  conclude that if $B_1$ is sufficiently large, then every vertex
  $v\in V(G_1)$ is either connected to at most one vertex of
  $B_1$ or to all vertices of~$B_1$. Just as in the proof
  of \cref{lem:realisedtypes}, we show that there are less than $f(1)$
  vertices that are adjacent to all vertices of $B_1$. We define $S_1$ as the
  set of all those (less than $f(1)$) high degree vertices: $S_1 = \{v\in V(G)
  \mid N(B_1)\subseteq N(v)\}$. 

  Note that $B_1\cap S_1=\emptyset$ if $\abs{B_1}\geq f(1)$, as in this case, as
  above, the vertices of $B_1$ form an independent set. Hence every
  vertex $v\in V(G_1-S_1)$ is connected to at most one vertex of $B_1$, in other
  words, $B_1$ is a $2$-independent set in $G_1-S_1$. We hence
  established all of the above properties for $i=1$.

  Let the sequences $G_1, G_2, \ldots, G_i$, $B_1,B_2,\ldots,B_i$ and
  $S_1,S_2,\ldots,S_i$ be defined for some fixed~$i$. By
  Assumption \ref{prop1}, $B_i$ is $2i$-independent in
  $G-Z_i$, hence we can contract the disjoint
  $i$-neighbourhoods of the vertices of $B_i$. Let
  $\tilde{B}_i = \{G[N_i^{G-Z_i}(v)] \mid v\in B_i\}$ and let $H$ be the graph resulting from
  the contraction (\ie $\tilde{B}_i\subseteq V(H)$). Let $\tilde{B}_i' \subseteq \tilde{B}_i$ be a
  large independent set in~$H$. We can obtain it by finding a large
  $\{E(x_1,x_2)\}$-indiscernible subsequence of $\tilde{B}_i$ in $H$ (by
  \cref{thm:extract_indiscernibles}). Let
  $B_i'\subseteq B_i$ be the set of vertices of $G$ such that
  $\tilde{B}_i' = \{G[N_i^{G-Z_i}(v)] \mid v\in B_i'\}$, \ie the vertices of $B_i'$ are the
  centres of $\tilde{B}_i'$. Define the graph $G_i'$ to be the
  depth-$i$ minor of~$G$ obtained by contracting the disjoint $i$-neighbourhoods
  of vertices in $B_i'$. Let $C$ be a $\Delta_{f(i+1)}$-indiscernible
  subsequence of $\tilde{B}_i'$ in $G_i'$. Define~$S_{i+1}$ as in case $i=1$, \ie as
  the set of vertices $v$ with $C\subseteq N(v)$. Note that we constructed $G_i'$
  and did not work with $H$ because otherwise the vertices of $S_{i+1}$ could
  have been contracted vertices. Those vertices are subgraphs of $G$ and can
  be arbitrarily large. Hence we would possibly delete much more than
  $\abs{S_{i+1}}$ vertices.

Define $B_{i+1}\subseteq B_i'$ as the set
  of vertices with $C = \{G[N_i^{G-Z_i}(v)] \mid v\in
  B_{i+1}\}$. If $C$ is large enough (just as in the case $i=1$), $C$ is
  $2$-independent in $G_i'-Z_{i+1}$, so $B_{i+1}$ is
  $2(i+1)$-independent in $G-Z_{i+1}$.

  For $i=r$ we are left with a set of size $m$ if we started with a set of
  size $p_r(m)$ with $p_r(x)$ chosen appropriately, tracing back the
  construction. The vertices $Z_i$ we delete during the construction are connected
  with all (contracted) vertices $\tilde{B}_i$. Note that the vertices of $Z_r$ 
  we delete are connected to vertices at 
  distance at most $r-1$ to the vertices of $B_r$. 
  We can hence merge the vertices $z$ of~$Z_r$ with the first $|Z_r|$
  branch sets $\tilde{B}_{r}$ (each vertex $z$ with one distinct branch
  set) to build a complete depth-$r$ minor of order $|Z_r|$. 
  Hence we can take $s(r) \coloneqq f(r)$, as this bounds the size of
  a largest complete depth-$r$ minor by assumption.

  We now show how to compute the sets $B_i, \tilde{B}_i$ and $Z_i$. In iteration
  $i$ we want to compute a $\Delta_{f(i)}$-indiscernible sequence as in the
  proof of \cref{thm:extract_indiscernibles}.  We remark that usually a graph
  $G$ from a nowhere dense class will be stored as a list of adjacency
  lists. As, for every $\epsilon>0$, $G$ is $c$-degenerate for some
  $c\in \Oof(n^\epsilon)$, we can store the adjacency relation in a more
  efficient way. Let $L$ be a linear order such that for each $v\in V(G)$ we
  have $\abs{\{ w\in N(v) \mid w <_L v \}} \le c$, \ie every vertex has at most~$c$ smaller neighbours. Now for every vertex we store the set of its smaller
  neighbours. Then we can implement an adjacency test in time $\Oof(n^\epsilon)$
  for any fixed $\epsilon>0$. Model-checking for the formulas of $\Delta_i$ with
  one quantifier is hence possible in time $\Oof(n^{1+\epsilon})$ by
  \cref{thm:extract_indiscernibles}. Furthermore, we have to perform the
  first $r$ levels of breadth-first searches around the elements of $A$ to keep
  track of the sets $\bar{B}_i$. Note that the $i$-neighbourhoods of these
  elements are disjoint when the searches are taken one step further (after
  deleting the set $S_i$). Hence, every edge appears at most once in all
  searches, and the searches can be carried out in time
  $\Oof(n^{1+\epsilon})$. Here we use again that a sufficiently large graph from
  a nowhere dense class of graphs has at most $n^{1+\epsilon}$ edges for any
  fixed $\epsilon>0$.
\end{proof}

\section{A Polynomial Kernel for \textsc{Distance-$r$ Dominating Set}}
\label{sec:kernel}
We now show how to obtain a polynomial kernel for the \textsc{Distance-$r$
  Dominating Set} problem.  Recall that a \emph{kernelization
algorithm}, or short \emph{kernel}, for the
\textsc{Distance-$r$ Dominating Set} problem parameterized by the solution size
is a polynomial time algorithm, which on input $G$ and $k$ computes another
graph~$H$ and a new parameter $k'$ which is bounded by a function of $k$
(independent of $\abs{V(G)}$), such that~$G$ contains a distance-$r$ dominating set of size
at most $k$ if, and only if, $H$ contains a distance-$r$ dominating set of size at most
$k'$. By abuse of notation, we also call the output of a kernelization algorithm
a kernel. If $\abs{V(H)}$ is bounded by a polynomial in $k$, then the kernel is
called a \emph{polynomial kernel}. We remark that by our definition, a 
kernelization algorithm on a class $\CCC$ does not necessarily 
output graphs from $\CCC$. Very strictly speaking, we are hence computing
a bi-kernel and not a kernel (as we reduce not to an instance of 
the original problem "distance-$r$ dominating set on $\CCC$" but to
the formally different problem "distance-$r$ dominating set). 

The idea is to kernelise the instance in two phases. 
In the first phase we reduce the number of \emph{dominatees}, that is, 
the number of those vertices whose domination is essential. More precisely, 
for an integer $k$, 
a set $Z\subseteq V(G)$ is called an \emph{$r$-domination core for parameter $k$} 
if every set 
$X\subseteq V(G)$ of size at most $k$ which $r$-dominates $Z$ also
$r$-dominates $V(G)$. In the second phase we reduce the number of \emph{dominators}, 
that is, the number of vertices that shall be used to dominate other vertices. 

\pagebreak
For the first phase, we argue just as Dawar and Kreutzer in 
Lemma~11 of~\cite{DawarK09}, to obtain an $r$-domination core in $G$. The key
idea of the lemma is to find a large $2r$-independent set $A$ after deleting at most
$s$ vertices, such that at least two elements of $A$ behave alike with respect to
every small dominating set.

Fix a nowhere dense class $\CCC$ of graphs and let $N(m,r)=p_r(m)$ for the
polynomial $p_r(x)$ and $s(r)$ characterising $\CCC$ as uniformly quasi-wide
according to \cref{thm:uqw}. Fix positive integers~$r$ and $k$ and let
$s\coloneqq s(2r)$. Let $c$ be the minimum integer such that 
$K_c\not\minor_{2r}G$ for
all $G\in \CCC$.

The proof of the following lemma is the same as in~\cite{DawarK09}, we
just use better bounds from \cref{thm:uqw}.

\begin{lemma}[see \cite{DawarK09}, Lemma 11]
  For $r\geq 0$ let $p_r$ be the polynomial defined in~\cref{thm:uqw}.
  There is an algorithm
  that, given an
  $n$-vertex graph $G\in\CCC$, $\epsilon>0$, $k>0$ and $Z\subseteq V(G)$ with 
  $\abs{Z}>p_{2r}\bigl((k+2)(2r+1)^s\bigr)\eqqcolon \ell$ runs in time $\Oof(s\cdot \ell^{c+1}\cdot 
  r\cdot c\cdot n^{1+\epsilon})$, and  
  returns a vertex $w\in Z$ such that for any set $X\subseteq V(G)$
  with $\abs{X}\leq k$,
  \[X \text{ $r$-dominates $Z$ if, and only if, $X$ $r$-dominates
    $Z\setminus \{w\}$}.\]
\end{lemma}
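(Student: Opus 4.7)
The plan is to feed $Z$ into the algorithmic version of \autoref{thm:uqw} with distance parameter $2r$. Because $|Z| > p_{2r}\bigl((k+2)(2r+1)^s\bigr)$, this produces a set $S \subseteq V(G)$ with $|S| \leq s = s(2r)$ and a $2r$-independent set $B \subseteq Z$ in $G - S$ of size at least $(k+2)(2r+1)^s$. Since $B$ is $2r$-independent in $G - S$, we have $B \cap S = \emptyset$, so $\dist_G(v,t) \geq 1$ for all $v \in B$, $t \in S$. I classify the elements of $B$ by the distance profile
\[
  \sigma(v) \coloneqq \bigl(\min(\dist_G(v,t),\,r+1)\bigr)_{t \in S},
\]
which takes at most $(r+1)^{|S|} \leq (2r+1)^s$ values. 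Pigeonhole yields a class $B' \subseteq B$ with $|B'| \geq k + 2$ whose members share a common profile; I return any $w \in B'$ as the output.

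Only the implication ``$X$ $r$-dominates $Z \setminus \{w\}$ $\Rightarrow$ $X$ $r$-dominates $Z$'' needs verification, since the other direction is immediate from $Z \setminus \{w\} \subseteq Z$. Fix $X \subseteq V(G)$ with $|X| \leq k$ such that $X$ $r$-dominates $Z \setminus \{w\}$; the goal is to show $\dist_G(X,w) \leq r$. Each of the $|B' \setminus \{w\}| \geq k+1$ vertices $v$ is $r$-dominated by some $x \in X$. Call such a pair $(v,x)$ \emph{internal} if some $xv$-path of length at most $r$ avoids $S$, and \emph{external} otherwise, in which case every witness contains some $t \in S$ with $\dist_G(x,t) + \dist_G(t,v) \leq r$. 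Because $B$ is $2r$-independent in $G - S$, the sets $N_r^{G-S}(v)$ for $v \in B$ are pairwise disjoint, so any single $x \in X$ can internally dominate at most one vertex of $B'$. Hence at most $k$ vertices of $B' \setminus \{w\}$ are dominated internally, leaving some $v^* \in B' \setminus \{w\}$ that is dominated externally, say by $x^* \in X$ through $t \in S$ with $\dist_G(x^*,t) + \dist_G(t,v^*) \leq r$. Since $v^*$ and $w$ share the profile $\sigma$ and $\dist_G(t,v^*) \leq r$, we have $\dist_G(t,w) = \dist_G(t,v^*)$, and the triangle inequality yields $\dist_G(x^*,w) \leq r$, establishing that $w$ is $r$-dominated by $x^*$.

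For the running time I first cut $Z$ down to a subset of size exactly $\ell + 1$. Invoking the algorithm of \autoref{thm:uqw} produces $S$ and $B$ in time $\Oof(r \cdot c \cdot \ell^{c+1} \cdot n^{1+\epsilon})$. Computing each profile $\sigma(v)$ reduces to performing a depth-$(r+1)$ truncated breadth-first search from each of the $|S| \leq s$ vertices of $S$, costing $\Oof(s \cdot n^{1+\epsilon})$ since a sufficiently large $n$-vertex graph from a nowhere dense class has at most $n^{1+\epsilon}$ edges; bucketing $B$ by profile and extracting $w$ is then linear in $\ell$. The total matches the stated bound $\Oof(s \cdot \ell^{c+1} \cdot r \cdot c \cdot n^{1+\epsilon})$. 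The only genuinely delicate step is the internal/external dichotomy in the second paragraph, and it is precisely the $2r$-independence of $B$ in $G - S$, delivered by \autoref{thm:uqw}, that makes it go through.
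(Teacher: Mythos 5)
Your proof is correct and follows essentially the same strategy as the paper: apply \autoref{thm:uqw} with radius $2r$ to extract $S$ and a $2r$-independent $B\subseteq Z$, bucket $B$ by distance profiles over $S$, pigeonhole to get $k+2$ members sharing a profile, and transfer domination from a sibling $v^*$ to $w$ via a common vertex $t\in S$ using the shared profile. The only cosmetic difference is that you truncate distances at $r+1$ while the paper records them up to $2r$; since only distances $\le r$ to a vertex of $S$ are ever invoked in the transfer step, your truncation is the sharper (and equally valid) choice, giving $(r+1)^s\le(2r+1)^s$ buckets. Your explicit internal/external dichotomy spells out what the paper leaves implicit in the statement that one of the $k+1$ disjoint balls $N_r^{G-S}(b_i)$ must be empty of $X$; both formulations rest on exactly the same disjointness consequence of $2r$-independence. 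The running-time accounting also matches the paper's.
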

\begin{proof}
Fix a set $A\subseteq Z$ of size exactly $\ell$. By \cref{thm:uqw}
we can find in time $\Oof(r\cdot c\cdot \ell^{c+1}\cdot n^{1+\epsilon})$
sets $B\subseteq A$ and $S\subseteq V(G)$ such that $|S|\leq s$ and 
$|B|\geq (k+2)(2r+1)^s$ such that $B$ is $2r$-independent in $G-S$. 
Let $S=\{t_1,\ldots, t_s\}$ and, for each $v\in B$, compute
the distance vector $d_v=(d_1,\ldots, d_s)$, where $d_i=\mathrm{dist}(v,t_i)$
if this distance is at most $2r$ and $d_i=\infty$, otherwise. For this, we
have to perform $s$ breadth-first searches which takes time 
$\Oof(s\cdot n^{1+\epsilon})$. Note that there are at most $(2r+1)^s$
distinct distance vectors. Since $\ell\geq (k+2)(2r+1)^s$, there are 
at least $k+2$ elements $b_1,\ldots, b_{k+2}\in B$ which have
the same distance vector. Now we choose $w\coloneqq b_1$ and show that
for any set $X \subseteq V(G)$ with
$|X| \leq k$, $X$ $r$-dominates $Z$ if, and only if, $X$ $r$-dominates $Z
\setminus \{b_1\}$.

The direction from left to right is obvious.  Now, suppose $X$ $r$-dominates
$Z \setminus \{b_1\}$.  Consider the sets $B_i \coloneqq N_r^{G-S}(b_i)$ for $i \in
[2,\ldots,k+2]$.  These sets are, by construction, mutually disjoint.  Since there
are $k+1$ of these sets, at least one of them, say $B_j$, does not contain any
element of $X$.  However, since $b_j \in Z\setminus \{b_1\}$ there is a path
of length at most $r$ from some element $x$ in $X$ to $b_j$.  This path must,
therefore, go through an element of $S$.  Since $b_1$ and $b_j$ have
the same distance vector, we conclude that there is also a path of length at most $r$ from $x$ to $b_1$
and therefore $X$ $d$-dominates $Z$.

For the complexity bounds, note that all distance vectors can be computed
in time $\Oof(|S|\cdot |A| \cdot |G|) = \Oof(s\cdot p_{2r}\bigl((k+2)(2r+1)^s\bigr) \cdot n)$
(recall that $G$ is degenerate).  Adding this to the
$\Oof(r\cdot c\cdot \ell^{c+1}\cdot n^{1+\epsilon})$ time to find $S$ and $A$ gives us the
required bound.
\end{proof}

We now proceed as follows. We let $Z=V(G)$ and apply the above procedure to 
remove an irrelevant element from the $r$-domination core $Z$ until $\abs{Z}\leq p_{2r}\bigl((k+2)(2r+1)^s\bigr)$.  

\begin{corollary}\label{crl:domcore}
Let $\CCC$ be a nowhere dense class of graphs and let $k,r\in \N$. There is 
an algorithm running in time $\Oof(n^{2+\epsilon})$ that given an 
$n$-vertex graph $G\in\CCC$ and $\epsilon>0$, computes an 
$r$-domination core for parameter
$k$ of $G$ of size polynomial in~$k$. 
\end{corollary}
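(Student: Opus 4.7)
The plan is to obtain the $r$-domination core by iteratively applying the preceding lemma to shrink $Z$ one vertex at a time. Initialise $Z \coloneqq V(G)$ and set $\ell \coloneqq p_{2r}\bigl((k+2)(2r+1)^s\bigr)$, which depends only on $r$, $k$ and the fixed class $\CCC$ (via the constants $s = s(2r)$ and the polynomial $p_{2r}$). While $|Z| > \ell$, invoke the lemma to produce a vertex $w \in Z$ such that every set $X \subseteq V(G)$ with $|X| \le k$ $r$-dominates $Z$ if, and only if, it $r$-dominates $Z \setminus \{w\}$. Replace $Z$ by $Z \setminus \{w\}$ and repeat. When the loop terminates we have $|Z| \le \ell$, and since $\ell$ is polynomial in $k$ (with $r$ and $\CCC$ fixed), the final $Z$ has polynomial size as required.

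For correctness I would prove by induction on the number of removals that throughout the procedure, any $X \subseteq V(G)$ with $|X| \le k$ that $r$-dominates the current set $Z$ also $r$-dominates $V(G)$. The base case $Z = V(G)$ is trivial. For the inductive step, when $w$ is removed, suppose $X$ (with $|X|\le k$) $r$-dominates the new $Z' = Z \setminus \{w\}$; by the lemma, $X$ then also $r$-dominates the previous $Z$, and by the induction hypothesis this means $X$ $r$-dominates $V(G)$. Thus the final $Z$ is an $r$-domination core for parameter $k$.

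For the running time, each call to the lemma costs $\Oof(s \cdot \ell^{c+1} \cdot r \cdot c \cdot n^{1+\epsilon})$, and since $s$, $c$, $r$, $\ell$ are constants once $r$, $k$ and $\CCC$ are fixed, a single iteration runs in time $\Oof(n^{1+\epsilon})$. The loop performs at most $n - \ell < n$ iterations, giving a total running time of $\Oof(n^{2+\epsilon})$, matching the bound claimed in the statement. There is no real obstacle here beyond bookkeeping: the whole argument is a direct iteration of the previous lemma, with the polynomial bound from \autoref{thm:uqw} being precisely what ensures that $\ell$ is polynomial in $k$ rather than a tower, and hence that the resulting domination core is polynomial in size.
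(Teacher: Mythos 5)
Your proposal is correct and matches the paper's argument exactly: the paper also sets $Z = V(G)$ and repeatedly applies the preceding lemma to delete an irrelevant vertex until $\abs{Z}\leq p_{2r}\bigl((k+2)(2r+1)^s\bigr)$, with the same chain-of-equivalences correctness argument and the same accounting of at most $n$ iterations at cost $\Oof(n^{1+\epsilon})$ each (constants depending on $k$, $r$ and $\CCC$), yielding $\Oof(n^{2+\epsilon})$ and a core of size polynomial in $k$.
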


We now reduce the number of \emph{dominators}, 
that is, the number of vertices that shall be used to dominate other vertices. 
For this, observe that only vertices at distance at most $r$ to a vertex from 
the $r$-domination core are relevant. Furthermore, if there are two vertices
$u,v\in V(G)$ with $N_r(u)\cap Z=N_r(v)\cap Z$, it suffices to keep one 
of $u$ and $v$ as a representative. 

\begin{lemma}\label{lem:representatives}
Let $\CCC$ be a nowhere dense class of graphs and let $r\in \N$. 
There is an algorithm running in time $\Oof(n^{1+\epsilon}\cdot |Z|^{1+\epsilon})$, that given an 
$n$-vertex graph $G\in \CCC$, $\epsilon>0$ and $Z\subseteq V(G)$ computes a minimum size 
set $Y\subseteq V(G)$ such that for all $u\in V(G)$ there is $v\in Y$ with 
$N_r(u)\cap Z = N_r(v)\cap Z$. 
\end{lemma}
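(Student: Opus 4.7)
The plan is to compute the equivalence classes of the relation $\sim$ on $V(G)$ defined by $u\sim v$ iff $N_r(u)\cap Z = N_r(v)\cap Z$, and to return one vertex from each class. The resulting $Y$ is of minimum size, since any set with the required property must contain at least one vertex per $\sim$-class.

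First, for every $z\in Z$ I would run a breadth-first search of depth $r$ from $z$ to obtain the set $N_r(z)$. Using the degeneracy-based adjacency representation described in the proof of \autoref{thm:uqw}, a single such BFS costs $\Oof(n^{1+\epsilon})$, giving a total BFS cost of $\Oof(|Z|\cdot n^{1+\epsilon})$. Fix an ordering $z_1,\dots,z_{|Z|}$ of $Z$; during the BFS from $z_j$ append the label $j$ to a list $\sigma(u)$ for every discovered vertex $u$. Since $z\in N_r(u)$ iff $u\in N_r(z)$, the resulting list $\sigma(u)$ is a sorted encoding of $N_r(u)\cap Z$.

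Next I would group vertices whose signatures coincide using a trie indexed by the sorted lists (equivalently, a lexicographic radix sort on the signatures). Each leaf of the trie then corresponds to exactly one $\sim$-class, and any vertex stored at a leaf serves as its representative; the collection of representatives is the desired set $Y$.

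The main obstacle is matching the claimed running time. The BFS phase costs $\Oof(|Z|\cdot n^{1+\epsilon})$, which is already below the target bound. For the trie phase, the total length of all signatures is $\sum_{z\in Z}|N_r(z)| \leq |Z|\cdot n$, so inserting every signature into the trie and extracting the representatives takes time $\Oof(|Z|\cdot n) \leq \Oof(n^{1+\epsilon}\cdot |Z|^{1+\epsilon})$. The small size of $|Y|$ itself is not needed for the algorithm to be fast, but it is a convenient byproduct of \autoref{crl:neighbourhoodcomplexity}, which bounds the number of distinct values of $N_r(v)\cap Z$ by $|Z|^s$ for a constant $s=s(r,\CCC)$.
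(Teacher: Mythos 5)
Your proposal is correct and follows essentially the same route as the paper: run a depth-$r$ BFS from every element of $Z$ to record the trace $N_r(v)\cap Z$ at each vertex, then group vertices with identical traces and keep one representative per class, with the same $\Oof(|Z|\cdot n^{1+\epsilon})$ BFS cost dominating. The only difference is the bookkeeping for the grouping step (a trie/radix sort on signatures instead of sorting the per-vertex lists and then sorting the vertices), which is an implementation detail within the claimed $\Oof(n^{1+\epsilon}\cdot|Z|^{1+\epsilon})$ bound.
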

\begin{proof}
First, each element $v\in V(G)$ learns the set $N_r(v)\cap Z$. For this, 
we perform $|Z|$ breadth-first searches of depth $r$ starting
at the elements of $Z$. Whenever a breadth-first search around
$z\in Z$ encounters the element $v$, it registers the element $z$
in a list of $v$. For any $\epsilon>0$, this takes time 
$\Oof(|Z|\cdot n^{1+\epsilon})$ on a graph from a nowhere dense class. 

Now, every element $v\in V(G)$ orders its list containing the elements
of $N_r(v)\cap Z$ in an increasing order. We assume here, that every
vertex is identified by a number between $1$ and $n$. This takes 
time $\Oof(n\cdot |Z|\cdot \log |Z|)$. For two elements $v,w\in V(G)$, we can 
now decide whether $N_r(v)\cap Z=N_r(w)\cap Z$ in time $\Oof(|Z|)$
by comparing the ordered lists representing their neighbourhood 
intersections. 

We now order the vertices of $V(G)$ according to their neighbourhood
intersections in time $\Oof(n\cdot \log n\cdot |Z|)$. We remove
duplicates from the sorted list in time $\Oof(n\cdot |Z|)$ to output
the set $Y$. 

Note that $\log n\leq n^\epsilon$ for all fixed $\epsilon>0$ and 
sufficiently large $n$. By rescaling $\epsilon$, we get a total running
time of $\Oof(n^{1+\epsilon}\cdot |Z|^{1+\epsilon})$. 
\end{proof}

According to \cref{crl:neighbourhoodcomplexity}, the set
$Y$ we compute in \cref{lem:representatives} has size at most $\abs{Z}^s$
for some integer $s$ depending only on $r$ and $\CCC$. 

Let $Z$ be a $r$-domination core of $G$, let $Y\subseteq V(G)$ 
be a minimum size set such that for all $u\in V(G)$ there is $v\in Y$
with $N_r(u)\cap Z=N_r(v)\cap Z$. We construct a graph $H$ whose
vertex set is the union of $Z$ and $Y$. For every $v\in Y$, compute $N_r(v)\cap
Z$ and add the vertices and edges of a shortest path 
(of length at most $r$) between $v$ and
each $z\in N_r(v)\cap Z$ to $H$, such that
$N_r^G(v)\cap Z= N_r^H(v)\cap Z$. As $\abs{N_r^G(v)\cap Z}$ is 
bounded by $\abs{Z}$, we conclude that 
$\abs{V(H)}\leq r\cdot \abs{Z}^{s+1}$. We now add two new vertices 
$v,v'$ to $H$
and connect $v$ to every vertex except to the vertices 
of~$Z$ by a path of length $r$ and to~$v'$ by
a path of length $r$. It is easy to see (compare to Lemma 2.16 
of~\cite{drange2015kernelization}) that there exists a set $D$ of size
$k$ which $r$-dominates~$Z$ if, and only if, $H$ admits an distance-$r$ dominating 
set of size $k+1$. Hence, the size of $H$ is polynomially bounded by $k$ and 
$H$ can be computed in polynomial time by combining the above lemmas. 
This proves the main theorem of this section. 

\begin{theorem}
  Let $\CCC$ be a nowhere dense class of graphs. For every $r\in\N$ there is a
  polynomial~$p(x)$, a constant $c$ and an algorithm running in time
  $\Oof(p(k)^c\cdot n^{2+\epsilon})$ which, given a graph $G\in \CCC$, $\epsilon>0$
  and $k\in \N$ computes a
  kernel for the \textsc{Distance-$r$ Dominating Set} problem of size $p(k)$ on $G\in \CCC$.
\end{theorem}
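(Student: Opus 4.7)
The plan is to combine the two reduction phases already assembled in this section into a single polynomial-time procedure. First I would iteratively apply \sref{Corollary}{crl:domcore}, starting from $Z = V(G)$ and repeatedly dropping an irrelevant vertex, until $\abs{Z} \leq p_{2r}\bigl((k+2)(2r+1)^s\bigr)$. This yields an $r$-domination core of size polynomial in $k$ in total time $\Oof(n^{2+\epsilon})$. Second, I would invoke \sref{Lemma}{lem:representatives} on this~$Z$ to obtain a minimum representative set $Y \subseteq V(G)$ such that every vertex of $G$ has an $r$-neighbourhood twin (restricted to~$Z$) in $Y$. By \sref{Corollary}{crl:neighbourhoodcomplexity} we have $\abs{Y} \leq \abs{Z}^s$ for some $s$ depending only on $r$ and $\CCC$, hence still polynomial in $k$.

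Next I would construct the kernel graph $H$. Take $V(H) \supseteq Z \cup Y$, and for every $v\in Y$ and every $z\in N_r^G(v)\cap Z$ attach a fresh shortest path of length at most~$r$ between $v$ and $z$ that realises their distance, so that $N_r^H(v)\cap Z = N_r^G(v)\cap Z$. The number of vertices introduced is at most $r\cdot \abs{Y}\cdot \abs{Z} \leq r\cdot \abs{Z}^{s+1}$, which is polynomial in~$k$. Finally, to force the dominating set to look only at representatives, add two gadget vertices $v^\star, v^{\star\star}$: connect $v^\star$ by a length-$r$ path to every vertex of $V(H)\setminus Z$ and to $v^{\star\star}$ by another length-$r$ path. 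Any distance-$r$ dominating set of $H$ must contain $v^\star$ (the only way to reach $v^{\star\star}$), which simultaneously $r$-dominates all of $V(H)\setminus Z$, so the remaining budget must $r$-dominate $Z$.

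For correctness I would show the equivalence ``$G$ has an $r$-dominating set of size~$k$ iff $H$ has one of size $k+1$''. The forward direction replaces every chosen dominator $u\in V(G)$ by its twin in $Y$, which by construction has the same $r$-neighbourhood in $Z$ and hence in $H$. The backward direction uses that $Z$ is an $r$-domination core: any $k$ vertices from $H$ that $r$-dominate $Z$ translate (via the representative map) to $k$ vertices of $G$ that $r$-dominate~$Z$, and therefore $r$-dominate all of $V(G)$. This is modelled on \sref{Lemma}{lem:representatives} and the gadget construction of Lemma~2.16 of~\cite{drange2015kernelization}.

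The running time adds $\Oof(n^{2+\epsilon})$ for the core computation to $\Oof(n^{1+\epsilon}\cdot p(k)^{1+\epsilon})$ for the representative computation and the construction of~$H$, which fits within the claimed $\Oof(p(k)^c\cdot n^{2+\epsilon})$ bound after scaling $\epsilon$. The main obstacle I anticipate is bookkeeping the parameter blow-up across the nested reductions: the representative bound $\abs{Y}\leq \abs{Z}^s$ pushes the kernel size to $\abs{Z}^{s+1}$, so one has to verify that the exponent $s$ stemming from the neighbourhood-complexity corollary, together with the polynomial $p_{2r}$ coming from \autoref{thm:uqw}, compose to a single polynomial $p(k)$ whose constant can be absorbed into the claimed $c$.
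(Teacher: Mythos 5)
Your proposal is correct and follows the paper's own construction essentially verbatim: iterate \sref{Corollary}{crl:domcore} to shrink $Z$, apply \sref{Lemma}{lem:representatives} and \sref{Corollary}{crl:neighbourhoodcomplexity} to bound the dominator set $Y$, build $H$ from $Z\cup Y$ together with shortest paths realising the $N_r(\cdot)\cap Z$ intersections, and add the two-vertex gadget with length-$r$ paths so that $r$-dominating $Z$ with $k$ vertices becomes equivalent to $r$-dominating $H$ with $k+1$ vertices. The running-time bookkeeping and the size bound $r\cdot\abs{Z}^{s+1}$ also match the paper.
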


\section{Single Exponential Parameterized Algorithms on Nowhere Dense Classes}
\label{sec:single}

In \cite{DawarK09}, the authors show how the concept of
uniform quasi-wideness can be used to design parameterized
algorithms for dominating set problems on nowhere dense classes of
graphs. However, the dependence on the parameter in the algorithms in  \cite{DawarK09}
is enormous, usually manyfold exponential. 

In this section we combine the tools developed in the previous
section with the general technique for obtaining parameterized
algorithms in
\cite{DawarK09} to design parameterized algorithms for several graph
problems on 
nowhere dense classes with only a single exponential dependence on the
parameter value.

We demonstrate the idea by showing that the \textsc{Connected
  Dominating Set} problem can be solved in time $2^{p(k)}\cdot n^c$ for a
fixed constant $c$ and a polynomial $p(x)$. This example is
particularly interesting as it was shown in
\cite{drange2016kernelization} that the problem is unlikely to have a
polynomial kernel on nowhere dense classes. Hence a single exponential
parameter dependence cannot be obtained from a polynomial
kernelization algorithm. 

\begin{theorem}\label{thm:single}
  Let $\CCC$ be a nowhere dense class of graphs. Then there is a
  polynomial $p(x)$, and an
  algorithm running in time $\Oof(2^{p(k)}\cdot n^{1+\epsilon})$ which, given an $n$-vertex
  graph $G$, $\epsilon>0$ and a number $k$ as input, decides whether $G$ contains a
  connected dominating set of size $k$. 
  \end{theorem}
\begin{proof}
  For $i\geq 1$ let $d_i$ be the minimum size of a clique that cannot be obtained as
  depth-$i$ minor in any member $H\in \CCC$. Let $s$ and $N$ be
  a margin of the class $\CCC$. By \cref{thm:uqw}, we can
  choose $s(r)$ such that $s(r)\le d_{r}$ and $N(r, m) \le
  m^{c(r)}$ for some function $c(r)$ depending only on
  $r$.

  Let $G$ and $k$ be given.   Let $s \coloneqq
  s(1) = d_1$ and let $K \coloneqq N(1, k+1)  = (k+1)^{c(1)}$. 
  During the algorithm we will maintain sets $W_i$ and $X_i$ of
  vertices such that $\abs{X_i} = i$ and $W_i$ is the set of vertices in
  $G$ not dominated by any member of $X_i$. 
  We initialise $W_0 \coloneqq V(G)$ and $X_0 \coloneqq \emptyset$. 

  Now, after $i$ steps, suppose that $W_i, X_i$ have been defined. 
  If $\abs{W_i}\geq K$, then we use \cref{thm:uqw} to compute a set
  $S$ of size $\abs{S}\leq s$ and a set $A\subseteq W$ of size $k+1$ such
  that $A$ is $2$-independent in $G-S$. As $A$ is $2$-independent, no
  vertex in $G-S$ can dominate two members of $A$. Hence, if there is
  a connected dominating set in $G$ using the vertices in $X_i$, then
  this set will need to include an element of $S$. We now branch over
  the $s$ possible choices of an element in $S$ and for each such
  $v\in S$ call the algorithm recursively with sets $X_{i+1} \coloneqq
  X_i\cup \{ v\}$ and $W_{i+1} \coloneqq W_i\setminus N[v]$, where $N[v]$
  denotes the closed neighbourhood of $v$, i.e.~$N[v] \coloneqq N(v) \cup \{v\}$. 
  
  If $i=k$, the recursion stops. If $W\neq \emptyset$, then we can return
  ``no'' as there cannot be any connected dominating set of size at most $k$
  using the vertices in $X_i$. If $W_i = \emptyset$, then we found a solution if,
  and only if, $X_i$ is connected.

  Otherwise we have $i<k$ and the recursion stops because $\abs{W_i}< K$.
  We now have a set $X_i$ of
  $\abs{X_i} = i < k$ vertices for our dominating set and still need to
  dominate $W_i$. Furthermore, we still need to connect the dominating set.

  We suppose that $\ell\le k-i$ vertices $y_1,\ldots,y_\ell$ are used to dominate
    $W_i$ and at most $k-i-\ell$ further vertices to connect the dominating set $X_i\cup\{y_1,\ldots,y_\ell\}$. For
    every $y_j$ we guess a set $Y_j$ that is dominated by $y_j$ and such that the sets
  $Y_j$ form a partition of $W_i$. We do not forbid that a vertex~$y_j$ also
  dominates some vertices in a $Y_{j'}$ for $j\neq j'$. Here the $Y_j$'s are guessed
and the $y_j$'s  are then computed. In other words, 
  for any partition of $W_i$ into $l\leq k-i$ non-empty sets $Y_1, \dots Y_l$ we
  do the following. First, for each $Y_j$ we compute the set $D_j$ of vertices
  $v\in V(G)$ such that $Y_j \subseteq N[v]$.  If for some $j$ there is no such
  vertex, then we discard this partition. So suppose $Y_1, \dots, Y_l$ is a
  partition such that $D_1, \dots, D_l$ are all non-empty. Then we can define a
  dominating set of $G$ by adding one element $w_j$ of each $D_j$ to the set
  $X_i$. By construction, every vertex $v\in V(G)\setminus W_i$ is dominated by
  a member of $X_i$ and every $v\in W_i$ is dominated by the vertex $w_j$ chosen
  for the partition $Y_j$ containing $v$.

  Hence, all that remains is to show that we can choose the $w_1,
  \dots, w_l$ so that $X_i \cup \{ w_1, \dots, w_l\}$ can be turned
  into a connected set by adding at most $k-(i+l)$ extra vertices.
  We solve this problem by using the Dreyfus-Wagner algorithm
  \cite{DreyfusW72} for
  solving Steiner trees. The Dreyfus-Wagner algorithm computes a
   minimum Steiner tree for a set of at most $T$ terminals in
  $\Oof(3^Tn + 2^Tn^2 + n^2 \log n + nm)$ time, where $m$ is the
  number of edges of the input graph  on $n$ vertices.

  Now, suppose we are given the set $X_i$ of size $i$, the partition $Y_1, \dots, Y_l$ of $W_i$
  into disjoint sets, for $l\leq k-i$ and the sets $D_1, \dots,
  D_l$. For all $1\leq j \leq l$ we add a fresh vertex $t_j$ to the
  graph~$G$ and add a path $P(t_j, v)$ of length $3$ between $t_j$
  and every member $v$ of
  $D_j$ (so that the paths $P(t_j, v)$ and $P(t_{j'}, v')$ are
  internally vertex disjoint if $\{t_j, v\}\neq \{t_{j'}, v'\}$). 
  Let $G'$ be the augmented graph. We now call the
  Dreyfus-Wagner algorithm on $G'$ with terminal set $X_i \cup \{
  t_1, \dots, t_l\}$. Let $T$ be the resulting Steiner tree. Then $T$
  needs to contain at least one vertex of every set $D_j$, as this is
  the only way to connect the terminal $t_j$ to the rest of the
  graph. Hence, any Steiner tree~$T$ is a connected dominating set of
  $G'$. 

  Note that in any minimum size Steiner tree for this terminal
  set, every $t_j$ is connected by exactly one path $P(t_j, v)$ to a
  vertex $v\in D_j$. To see this, recall that every $w\in Y_j$ has
  every vertex in $D_j$ as its neighbour. The only reason why a
  minimum Steiner tree might contain two paths $P(t_j, v)$ and $P(t_j, v')$
  for distinct $v, v'\in D_j$ is that this is needed to connect $v$
  and $v'$. But then, the path $P(t_j, v')$ can be replaced by adding
  any member of $Y_j$ instead. As $P(t_j, v')$ has two internal
  vertices, this would decrease the size of the Steiner tree.
  In particular this implies that every vertex $t_j$ is a leaf of a
  minimum size Steiner tree.
  
  We can now turn $T$ into a connected dominating set for
  $G$ as
  follows. We simply delete for each~$t_j$ the vertex $t_j$ and all
  internal vertices of the (unique) path $P(t_j, v)$ connecting $t_j$
  to a member of $D_j$. By the argument above, the resulting tree $T'$ is still
  connected and forms a connected dominating set of $G$. Conversely,
  any minimum size connected dominating set for~$G$ containing $X_i$
  and also at least one vertex of each $D_j$ can be extended to a
  minimum size Steiner tree in $G'$ with the terminal set as above. It
  follows that if for any partition $Y_1, \dots, Y_t$ we obtain a
  Steiner tree as above of size at most~$k$, then we can return this
  tree as a connected dominating set, and otherwise can conclude that
  there is no connected dominating set of size at most $k$ using the
  vertices in $X_i$ and the partition $Y_1,\ldots,Y_\ell$. 

  Hence, the whole search tree yields a correct decision procedure for the
  connected dominating set problem. The running time is bounded by the size of
  the search tree and the time the algorithm takes at each leaf of the search
  tree.  Note that the branching of the search tree is bounded by $s = d_1$ and
  the depth by $k$. Hence, in total we have at most $s^{k+1}-1$ nodes. At every
  leaf we have to consider every possible partition of the set $W_i$ of size at
  most $K = (k+1)^{c(1)}$ into at most $k$ disjoint sets. For each partition we
  call the Dreyfus-Wagner algorithm whose running time is dominated by
  $\Oof(3^k \abs{W_i}^2 \cdot \log\abs{W_i})$.  Hence, the total running time of the algorithm is
  bounded by $2^{p(k)}\cdot n^{1+\epsilon}$ for a polynomial $p$.
\end{proof}

 The proof method used in the previous theorem can be used to
 establish single exponential parameterized algorithms for other
 problems on nowhere dense classes of graphs. Hence, while the general
 proof technique is the same, we can dramatically improve the running
 time of our algorithms from a multiple exponential parameter
 dependence to single exponential.

\section{Conclusion and recent results}

Nowhere dense classes are classes of uniformly sparse graphs with
a rich algorithmic theory. Especially the characterisation of these
classes via uniform quasi-wideness finds many algorithmic applications. 
In this paper we have proved polynomial bounds on the margin of
uniformly quasi-wide classes, and hence nowhere dense classes. 
These bounds yield new algorithmic tools, especially in the area of parameterized
complexity for nowhere dense graph classes. We have derived a 
polynomial kernel for the \textsc{Distance-$r$ Dominating Set} problem
and a parameterized algorithm with single exponential parameter
dependence for the \textsc{Connected Dominating Set} problem. 

The polynomial bounds on the margin of uniformly quasi-wide classes
are established by a connection to stability theory, a classical field 
of infinite model theory. The proof is based a non-constructive argument of 
Adler and Adler~\cite{adler2014interpreting}, who observed that
nowhere dense graph classes are stable. As a consequence, our 
upper bounds are given purely existentially and are not effectively
computable. A purely combinatorial and effective proof has recently
been given in~\cite{pilipczuk2018number}. Lower bounds, showing that
polynomial bounds are the best one can hope for even on classes of 
graphs which exclude a fixed minor, were established in~\cite{NadaraPRRS18}.

The kernelization results for the \textsc{Distance-$r$ Dominating Set}
problem have recently been improved from polynomial kernels to almost
linear kernels in~\cite{EickmeyerGKKPRS17}. The \textsc{Connected
Dominating Set} problem does not admit polynomial kernels even 
on bounded expansion classes of graphs (unless $\text{NP}\subseteq \text{coNP/poly}$)~\cite{drange2016kernelization}. However, 
lossy kernels of linear size for the 
\textsc{Connected Dominating Set} problem 
on classes of bounded expansion and lossy kernels of
polynomial size for the
\textsc{Connected Distance-$r$ Dominating Set} problem 
on nowhere dense classes were 
recently established in~\cite{EibenKMPS18}.

\bibliographystyle{plain}
\bibliography{ref}

\end{document}